\documentclass{article}
\usepackage[utf8]{inputenc}
\usepackage{amsfonts}
\usepackage{amsmath}
\usepackage{authblk}

\usepackage{amsmath, amssymb, amsthm, mathtools, mathrsfs, bbm}
\usepackage[margin=1in]{geometry}

\newcommand{\R}{\mathbb{R}}

\newcommand{\C}{\mathbb{C}}

\newcommand{\CC}{\mathcal{C}}

\theoremstyle{plain}
\newtheorem{theorem}{Theorem}[section]

\newtheorem{lemma}[theorem]{Lemma}

\theoremstyle{definition}

\newtheorem{remark}[theorem]{Remark}

\numberwithin{equation}{section}

\def\ssgbegin{\begin{itemize}}

\def\ssgend{\end{itemize}}
\newcommand{\ignore}[1]{}

\def\IR{\mathbb{R}}

\def\IZ{\mathbb{Z}}
\def\E{{\mathbb E}}
\def\tr{\mbox{tr}\;}

\newcommand{\be}{\begin{equation}}
\newcommand{\ee}{\end{equation}}
\newcommand{\ba}{\begin{array}}
\newcommand{\ea}{\end{array}}
\newcommand{\bea}{\begin{eqnarray}}
\newcommand{\eea}{\end{eqnarray}}
\usepackage{pstricks,pst-node,pst-text,pst-3d}
\definecolor{lightgray}{gray}{0.95}
\usepackage{color,xcolor}
\definecolor{keywordcolor}{rgb}{0.5, 0.1, 0.1}  
\definecolor{tacticcolor}{rgb}{0.1, 0.2, 0.4}   
\definecolor{commentcolor}{rgb}{0.4, 0.4, 0.4}  
\definecolor{symbolcolor}{rgb}{0.0, 0.1, 0.6}   
\definecolor{sortcolor}{rgb}{0.1, 0.5, 0.1}     
\definecolor{attributecolor}{rgb}{0.7, 0.1, 0.1}
\usepackage[colorlinks=true,linkcolor=blue,urlcolor=blue,citecolor=blue]{hyperref}
\usepackage{upgreek}
\usepackage{listings}

\usepackage{xcolor}

\title{Formalization of QFT} 
\date{March 2026}

\author[1]{Michael R. Douglas \thanks{mdouglas@cmsa.fas.harvard.edu}}
\author[2]{Sarah Hoback \thanks{ sarahhoback@g.harvard.edu,  now at Physical Superintelligence PBC: sarah@psi.inc}}
\author[2]{Anna Mei\thanks{amei@g.harvard.edu}}
\author[3]{Ron Nissim \thanks{rnissim@mit.edu}}

\affil[1]{Center Of Mathematical Sciences And Applications, Harvard University, Cambridge, MA 02138 USA }
\affil[2]{Jefferson Physical Laboratory, Harvard University, Cambridge, MA 02138 USA}
\affil[3]{Simons Building, MIT, Cambridge, MA 02139 USA}

\begin{document}
\maketitle{}

\begin{abstract}
A foundational result in constructive quantum field theory is the construction of the free bosonic quantum field theory in four-dimensional Euclidean spacetime and the proof that it satisfies the Glimm–Jaffe axioms, a variant of the Osterwalder–Schrader axioms. We present a formalization of this result in the Lean 4 interactive theorem prover. The project is intended as a proof of concept that extended arguments in mathematical physics can be translated into machine-checked proofs using existing AI tools. We begin by introducing interactive theorem proving and constructive quantum field theory, then describe our formalization and the design decisions that shaped it. We also explain the methods we used, including coding assistants, and conclude by considering how AI assisted formalization may influence the future of theoretical physics.

Our original release assumed three results, Minlos’ theorem, the nuclear property of Schwartz space, and Goursat's theorem. 
In subsequent releases from our group and from contributors from the Lean community,
these assumptions have been proven (or avoided),  so that the OS/GJ axioms are now proven using only Lean and its library Mathlib.
\end{abstract}
\vfill\eject

\tableofcontents

\newpage

\section{Introduction}
\label{s:intro}

AI is starting to have a major impact on science.  This impact is happening in many ways, and one of them is its use in interactive theorem proving (ITP), a technology for proving and checking mathematical theorems expressed in a formal language, via interactive theorem proving and AI coding assistants. There is rapid and recent development that demonstrates the practical maturity of machine-proven and verified mathematics at scale, including Lean-verified solutions achieving IMO 2025 gold-medal-level performance, the release of end-to-end formal (Lean 4) solution corpora for Putnam 2025 by autonomous theorem-proving systems, and a rapid expansion of autoformalization methods and benchmarks aimed at translating informal statements/proofs into verifiable Lean code \cite{harmonic_aristotle_imo2025_2025,harmonic_imo2025_repo_2025}. Many mathematicians are optimistic about the potential impact ITP will have on research level mathematics, though its influence elsewhere has been far less robust. 

We believe that ITP will also have great value for theoretical physics. To demonstrate that it is now practical, we have formalized the construction of free massive bosonic $d=4$ Euclidean quantum field theory (QFT) in Lean, following the treatment of Glimm and Jaffe
\cite{glimm2012quantum}. Our code is in the GitHub repo {\tt mrdouglasny/OSforGFF} together with its dependencies. In this preprint we explain our methods and many details.

As formalization is not a widely known concept in the theoretical physics community, we will begin by explaining what it is and why physicists should be interested in it.  We then explain our goals for this project, share some important foundational and technical results, and most importantly we discuss the definitions of QFT within various axiomatic systems and the reconstruction theorems which relate them.  We also outline more ambitious goals which we believe will be possible with strong enough formalization technology. The authors of this paper have felt the formalization landscape change beneath their feet during the course of this formalization project. The strength of the formalization technology we used, the publicly available Claude Code, GPT Codex and Gemini models, significantly improved over the span of 9 months. It is not out of the question that ambitious projects outlined in the conclusions will come into range in 2026.

At present, formalization is based on mathematical foundations, so to formalize
QFT we must work with rigorous mathematical definitions.
In \S \ref{s:math} we give a brief introduction to constructive QFT, both outlining
a few of the major approaches and then diving into the details we will need.  One
aspect of this is that the same physical ideas can be expressed in several mathematically different ways, all leading to comparable final results.  These distinctions become even more significant in a formal setting.  Concrete choices must be made, and these choices are strongly influenced by practical considerations, such as which definitions and theorems are already available in Lean libraries and how difficult the resulting proofs are to implement. As a result, we occasionally adopt constructions that are not the most conventional from a physics perspective, but which fit more smoothly into the existing formal framework. 
In \S \ref{s:details}, we explain the main choices we made, the reasons for those choices, and give an overview of the formal proofs.

In \S \ref{s:methods} we discuss our methods -- how we began, workflow, and tools for managing this medium size collaborative software project. Some of this is already outdated as LLM and coding assistant technology is evolving extremely rapidly.
In early 2025 coding assistants were not able to write Lean code at the required level.  Perhaps by 2027 we will just give them the informal math statement and they will do all the work.  
In \S \ref{s:going-forward} we give advice and suggestions from today's (early 2026) standpoint.  
In addition to introducing formal theorists to ITP and encouraging them to consider how it can be used in their research, 
we intend this to serve as a historical record of the current progress in AI-assisted formal verification.

\subsection{What is formalization and why?}

Interactive theorem proving (ITP) is a method for developing and checking mathematical proofs 
by expressing them in a formal language, a language with precise logical semantics
which can be checked by a computer.
The process of translating a piece of ``informal'' mathematics into a formal language is called formalization, and is  
presently done through interactive collaboration between a user and a proof assistant \cite{harrison_history_itp_2014}.
There is a related concept of automated theorem proving (ATP), which does not require human collaboration, while automated translation
to formal mathematics is called autoformalization \cite{Szegedy2020APP}.

ITP has been developed for many decades, and several proof assistants are well suited to expressing mathematics, including Coq/Rocq \cite{rocq_website}, Isabelle \cite{paulson_isabelle_1994}, and more recently Lean \cite{avigad_tp_lean4}.
Lean is the focus of much current community effort \cite{avigad_tp_lean4, math_in_lean, demoura_ullrich_lean4}
and has a sizable library of standard definitions and results called Mathlib \cite{DBLP:journals/corr/abs-1910-09336}.
Some recent introductions to Lean are \cite{avigad_tp_lean4, math_in_lean}.
To give a flavor of Lean, Figure~\ref{fig:lean-proof-snippet} shows a short Lean proof excerpt (adapted from \cite{avigad_tp_lean4}). We will save more detailed discussion for our actual topics, but at first glance one sees programming code not too different from familiar languages such as Python and Mathematica.  This is correct, with the crucial difference that the operations, both predefined and user defined, are all constrained to perform sound logical deductions.  The elementary steps include applying rules of logical inference, bringing in axioms, substituting variables and so on.  By the use of subroutine definitions and the sophisticated dependent type system,
logical arguments composed of many, many elementary steps can be expressed concisely.  If suitably written,
Lean code can be executed in the same way as other programming languages.  But one also has the option to use
non-computable logical arguments, such as the principle of the excluded middle (for any $X$, one of $X$ or $\mbox{not}\, X$ 
must be true).  Thus Lean and the other advanced ITP systems are capable of expressing any mathematical statement
and of verifying proofs which use the full range of mathematical technique.

\begin{figure}[t]\label{fig:lean-proof-snippet}
  \centering
  \includegraphics[width=\linewidth]{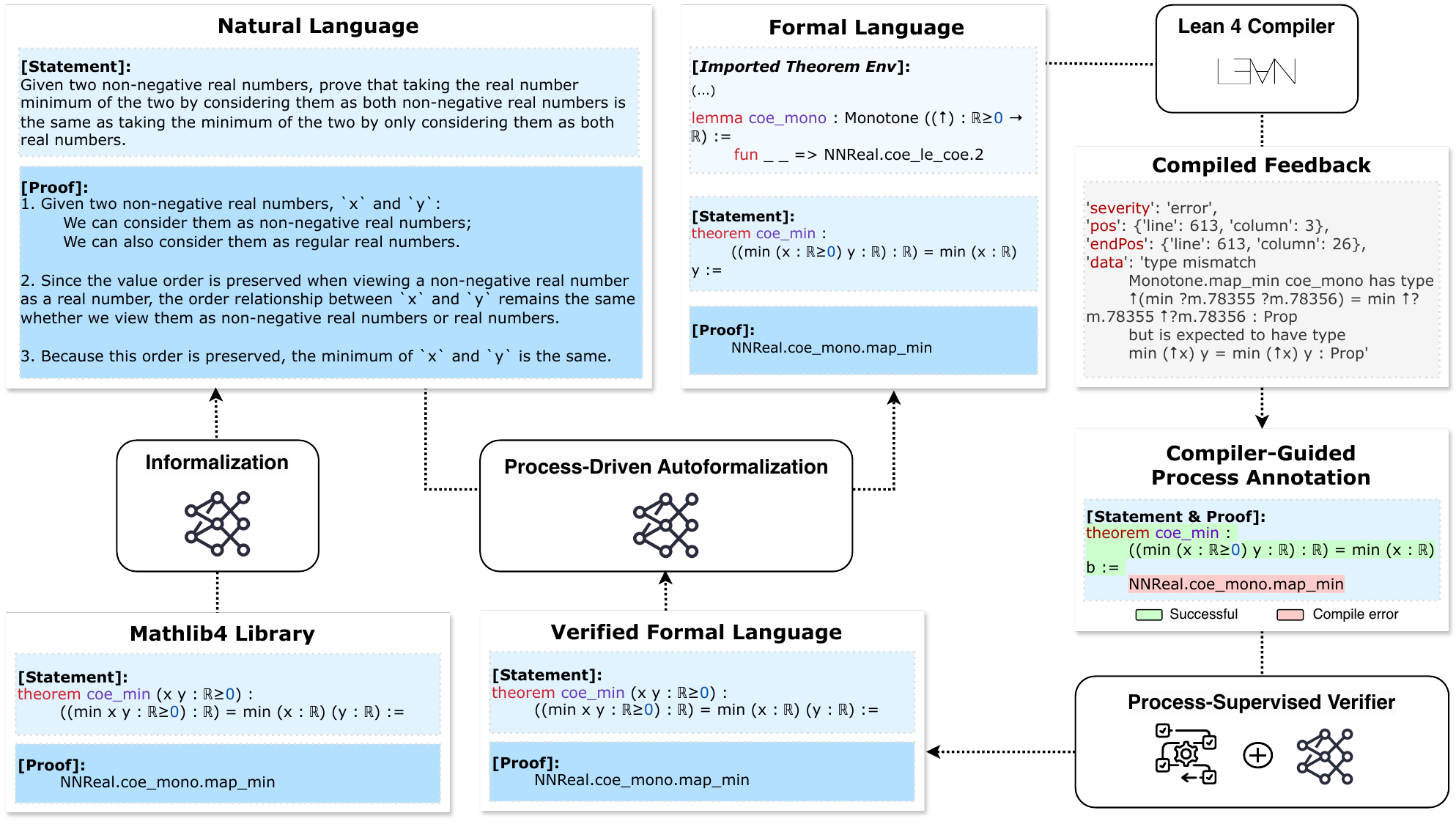}
  \caption{\textbf{PDA trained on FORML4 ( image from \cite{lu_process_driven_2024}).}
  PDA is aimed at \emph{statement} autoformalization rather than proof translation; proof steps are included to let the Lean compiler provide process-level feedback when jointly compiling the statement and proof steps. In the depicted example, the statement parses but a proof-step error reveals an incorrect autoformalization, guiding refinement  \cite{lu_process_driven_2024}.)}
  \label{fig:pda-overview}
\end{figure}

Clearly a technology which speeds up proof and is able to verify it in complete rigor would be extremely valuable throughout the mathematical sciences.  Mathematics journals are notorious for their lengthy review times, and this is largely because checking all of the details of a proof is laborious and time-consuming.  Even then there is no guarantee; there are famous cases of proofs which were published and generally accepted, only for flaws to be found later.  In addition to verification, another major benefit of ITP is that it facilitates collaboration.  Often the main bottleneck in assembling a large collaboration is the need to vet the contributors and verify their contributions; this difficulty could be largely removed. For these reasons, mathematicians are excited about the potential of ITP \cite{tao_pfr_blueprint_2023}.

How excited should theoretical physicists be about ITP?  
One significant barrier is that formalization, at least in its current state, requires all arguments to be stated with precise definitions and complete proofs, whereas much of the physics literature deliberately operates at a looser level of rigor. 
We will further discuss advantages of this flexible style below \S\ref{limitsofformalizatin},
but one must admit that
it introduces substantial ambiguity into scientific communication.  An example in the context of QFT is the definition of a path integral,
which is very nontrivial and involves many choices which are often left implicit, so that claims based on it must be carefully re-evaluated
in each new application.  By contrast, mathematical rigor and even more so
formal rigor force every mathematical object to be explicitly defined and every assumption to be stated. In doing so, one must articulate clearly the mathematical content of a theory and the precise hypotheses under which the arguments are valid, often revealing hidden assumptions that were taken for granted.  The resulting framework, while more demanding to construct, provides a shared and unambiguous foundation that can facilitate collaboration and long-term reuse of results, even in very different contexts.

Whatever one thinks about the advantages and disadvantages of rigor,
as formalization exists now it must be based on rigorous mathematics.
And as any physicist who has tried to read mathematical physics
knows, the structure of the discussions and proofs is  rather different from most of
theoretical physics, with a heavy reliance on functional analysis.  
This added precision can be advantageous, but not always.
Is it really true that we need rigor for formalization?  Could formalization be based on some other foundations more analogous to standard physical reasoning?
We will return to these questions in the conclusions.
But to use ITP now, we must formalize a piece of mathematical physics which is rigorously stated and proven.  An important example is the problem of rigorously defining QFT and proving that examples exist, often called ``constructing'' the QFT. This has been much studied and there are many results to draw on.   We chose a relatively simple but very foundational project to find out whether formalization is now practical.
And we learned quite a lot doing it, about the mathematical physics, about theorem proving and about AI, which we report here.

Formalization could be of great value to both math and physics. Yet the first question is, is it practical to formalize research
level mathematical physics such as QFT at all?
Why was this not already done?

\subsection{ITP in the age of AI}

As we write this in early 2026, ITP has not yet had a major impact on how mathematics research is done. One reason arises from the observation that much of mathematics builds on a deep structure of definitions based on other definitions, which takes years of study to gain even a basic familiarity.  This is also true of much of theoretical physics.  Now, to formalize a piece of mathematics, its prerequisite definitions and background theory must already be available in the library to build on; in practice one may initially omit some proofs by treating results as unchecked ``axioms'' while continuing development.
But even with this relaxed standard, only a tiny fraction of mathematics has been fully formalized so far \cite{massot_why_formalize_2024}.
Until substantially more of the mathematical corpus is formalized, the scope of ITP will remain limited. Moreover, even after researchers formalize important theorems of value to the broader community, the process of integrating their work into a foundational library is slow.  This is for good reasons: there  are important standards upheld by Mathlib for new code to be integrated into the library, namely it should be humanly readable and follow community-developed principles of programming and mathematical organization.
Still, this seems to be a serious bottleneck to wider adoption of the technology.

Furthermore, ITP has a steep learning curve for new users, taking months of study to gain even moderate proficiency
\cite{thiemann_structuring_definitions_2025,zhang_learning_rules_2024}. Formalization is laborious, even for experts. This contrasts with tools like computer algebra which can be profitably used after a few days of study.  Historically, it has taken around a week to formalize a single page of a textbook \cite{wiedijk2007formalization}. 
The Lean Mathlib library \cite{math_in_lean} 
is growing at about 300K lines/year; taking into account that a formal proof is about 5 times as long as an informal proof this corresponds to the entire community of several hundred active contributors formalizing about one textbook every 2 months.  At present Mathlib has fairly good coverage of the undergraduate math curriculum, but at this rate it will take many years to formalize the prerequisites for research level topics in differential geometry, PDE, Lie algebras and groups, and so on.
Another library PhysLean \cite{physlean_website} has been started with a focus on physics formalization, but these gaps remain.
Thus until recently, the answer to the question we raised on the practicality of formalizing constructive QFT was,  not really.

However, this situation is rapidly improving. Automated theorem proving technologies such as Isabelle's ``Sledgehammer'' \cite{blanchette_bohme_paulson_sledgehammer_2011}, and newer protocols for agentic interaction with Lean via the Model Context Protocol (MCP) \cite{lean_lsp_mcp_github,axprover_2025,mcp_spec_2025} have made steady progress.
Of potentially greater import, one can apply modern AI to interactive theorem proving; this direction, often referred to as AI for theorem proving (AITP), has been actively pursued since at least 2015 \cite{aitp_announcement_2015,aitp_site}.
Many experts believe this direction is starting to take off; for example, DARPA's expMath initiative aims to radically (``exponentially'') accelerate the rate of progress in pure mathematics by developing AI systems for decomposition and autoformalization \cite{darpa_expmath_program}.
At this rate, even the present Lean community could formalize 600 textbooks/year and rapidly bring much of math and physics into the scope of ITP.
And if the techniques become easy to learn, one can foresee widespread adoption.

There are several AITP paradigms. One is to view proof search as a game-like sequential decision process (often likened to solitaire), and to apply game-playing methods such as reinforcement learning (RL) and Monte Carlo search \cite{huang_gamepad_2018,bansal_holist_2019}.
The power of RL was famously demonstrated by AlphaGo and AlphaZero \cite{silver_alphago_2016,silver_alphazero_2018}, and
AlphaProof is a recent example in formal mathematics, combining RL with formal verification inside Lean \cite{hubert_alphaproof_2025}.

Another paradigm is to view formalization as a form of language translation and to automate it, autoformalization, using AI methods for language translation \cite{lu_process_driven_2024,poiroux_reliable_2025}.
This naturally connects with the remarkable recent development of large language models (LLMs) \cite{brown_gpt3_2020,openai_gpt4_2023}, which have revolutionized computational linguistics, and are now revolutionizing computer programming and many of the other tasks which lie at the heart of computing.  LLMs have remarkable problem solving and reasoning abilities and are advancing rapidly. There are even projections that ``artificial general intelligence'' (an ill-defined term) will be achieved soon.  Some AI researchers believe that AITP will be a central part of this.  One argument is that the most successful AI technologies are statistical and destined to make errors (at least occasionally), so to make long correct chains of argument they must be coupled to a verification engine such as an ITP system.

The imminent prospect of AI's doing math and science is raising questions which a few years ago would have been dismissed as science fiction.
Already AI is far faster and more prolific in generating
results than any human.   Soon the bottleneck will not be writing the papers, it will be
reading the papers!  How will we understand all of this new science?  A first step towards
understanding is simply to be convinced that the claims and arguments are correct, how
can we convince ourselves that a long paper full of intricate arguments is correct?  Math is already bottlenecked by review time.

Formalization provides a partial answer.  It is a way that the computer
can reliably certify that assumptions $A,B,C$ imply consequences $X,Y,Z$.
It does not completely solve the problem, after all we still need to convince
ourselves that $A,B,C$ were the right assumptions and $X,Y,Z$ answer our questions.
But it would be a great help.

We leave further discussion and speculation for other venues, but the point we draw from this is that ITP technology is advancing quickly and we expect the problems of practical difficulty and insufficient libraries to be largely solved over the next few years (the first author has consistently predicted this by 2030 in many talks as far back as 2019 \cite{douglas:2019}).  We will report below on our experience using the existing tools for theorem proving, and we do see rapid progress: our ability to do ITP with mid-2025 tools was far greater than it was at the beginning of that
year, and the early 2026 tools are noticeably better still.
It seems entirely possible that, over the coming years, AI will so much enhance our abilities (or even achieve superhuman abilities) that the advantages of rigor and formalization will far outweigh the remaining disadvantages.  Thus we believe that yes, theoretical physicists should be excited about ITP.

\subsection{Comparing Styles of Reasoning in Physics and Mathematics}\label{limitsofformalizatin}

A strength that physicists claim over mathematicians is that looser levels of rigor allow them to see far beyond what they can prove \cite{feynman1965character}. 
One joke has it that ``Mathematicians are proving what physicists have known for years.''
A less physics-centric observation is that in exploratory work, before ideas take their final form, often it is better to
maintain flexibility.  As Niels Bohr famously said, ``Never express yourself more clearly than you are able to think'' \cite{pais_genius_science_2000}. 

However, mathematical concepts often arise independently of physics, yet later map onto nature with striking success \cite{Wigner1960UnreasonableEffectiveness}.
Mathematics has obvious implications for physics research, starting with the need for physical theories to be mathematically consistent.  Mathematicians can unblock technical barriers of physics by providing new mathematical machinery and frameworks.   Novel theorems can make complex physical arguments manifestly true.
However, physics is having more and more impact on mathematics. There is a real advantage to making squishier conjectures and using less precise definitions of the objects one is reasoning about. Reasoning under uncertainty about the features and transformations which apply to an object is something physicists thrive on \cite{Atiyah2006InteractionGeometryPhysics}. 

As our understanding of the natural world evolves, our frameworks, definitions, and understanding of relevant key features of a system evolve. Kuhn’s account of scientific development distinguishes periods of “normal science,” in which a community works within a shared paradigm to solve well-posed puzzles, from rare episodes of “scientific revolution,” in which persistent anomalies provoke a crisis and are resolved only by a reorganization of the field’s core conceptual framework. On this view, mature formalisms and tightened standards of admissible argument are not neutral, they are part of what defines a paradigm and stabilizes normal science. This stabilization is productive, but it can also make conceptual innovation harder: anomalies that resist expression in the reigning formalism are more likely to be treated as technical puzzles, deferred, or re-described, rather than as pressure toward a new framework. If so, a comparative advantage of physicists’ informal, heuristic style is that it can generate candidate re-conceptualizations and partial “exemplars” more quickly than fully rigorous approaches can, supplying the kinds of alternatives Kuhn argues are necessary for genuine paradigm change \cite{Kuhn1962StructureScientificRevolutions}.

On a more practical level, one of the greatest difficulties we encountered is that the long algebraic calculations which are common in theoretical physics can
be quite difficult to formalize in Lean.  This is somewhat ironic as when used in the proper contexts, algebraic arguments are rigorous.
We will return to this issue in the conclusions.

\subsection{Brief status report on rigorous QFT with respect to the Mass Gap Problem}
\label{ss:cqft}

This section is written to put this project in the context of Yang-Mills, and to discuss recent developments in progress towards the mass-gap problem. The mass gap problem is the question of whether quantum Yang–Mills theory, the mathematical framework underlying the strong interaction, can be defined rigorously in four dimensions and proved to have a positive gap between the vacuum state and the lightest possible excitation. Informally, even though the classical gauge fields are “massless,” the quantum theory appears to produce only particles with nonzero minimum mass. The Clay Mathematics Institute formulates this as proving that for any compact simple gauge group, a nontrivial quantum Yang–Mills theory exists on ($\mathbb{R}^4$) and has a mass gap ($\Delta>0$). The mass gap problem is of interest to physicists because this gap is tied to the observed short-range, confining behavior of the strong force and to why free gluons are not seen experimentally, so solving the problem would put one of the Standard Model’s most successful pieces on a fully rigorous mathematical foundation rather than the current mix of physical evidence, approximation methods, and lattice simulations \cite{clay_yang_mills_mass_gap}.

In the 1970s there was a great effort among mathematical physicists to make the construction of interacting quantum field theories mathematically rigorous. A successful approach has been to construct a Euclidean version of the theory and then rigorously apply a Wick rotation procedure to construct the original quantum field theory. This approach is all based on a reconstruction theorem carrying out the Wick rotation procedure, provided the Euclidean theory satisfies a set of axioms known as the Osterwalder-Schrader (OS) axioms \cite{OS1972}. Around the same time there was a huge effort to construct Euclidean scalar quantum field theories. This program succeeded in the construction of the Euclidean $P(\Phi)_2$ and $\Phi_3^4$ theory were constructed along with the verification that they satisfy the OS axioms (See i.e. \cite{glimm2012quantum, simon2015p} for an overview of these results). On the other hand, in four dimensions, heuristic calculations indicated that there are no nontrivial scalar field theories, and it was rigorously verified that the $\Phi_d^4$ theory is trivial for $d \geq 5$ \cite{Aiz81}. Much more recently, this was also verified for $d=4$ \cite{AiDu21}.
But despite this, it is widely believed that the four dimensional Yang--Mills theory can be rigorously constructed,
and there are strong physical and computational arguments that it is nontrivial and has the mass gap property.

There has been a recent revival of interest in Yang--Mills theory and constructive quantum field theory in the math community. This is partially due to the introduction of regularity structures and paracontrolled calculus techniques in the theory of singular stochastic partial differential equations \cite{hairer_theory_2014,gubinelli_paracontrolled_2017}. Due to these technical breakthroughs, the stochastic quantization program for constructing Euclidean field theories has been carried out for the $\Phi_3^4$ theory, and part of the program has been carried out for pure Yang-Mills and Yang-Mills-Higgs in 2d and 3d \cite{chandra_chevyrev_hairer_shen_ym2d_2022}.

One approach to the problem is by taking continuum limits of lattice Yang--Mills model. In four space-time dimensions, continuum limit convergence for the Abelian lattice theory was proven by Driver \cite{Driver87}, and significant progress towards convergence for the non-Abelian theory was made by Balaban (i.e. \cite{balaban_large_1989,balaban1987renormalization}). The two dimensional non-Abelian pure Yang--Mills theory is relativey easy to construct via gauge symmetry, however constructing even the 2d Yang--Mills--Higgs theory or 3d pure Yang--Mills theory in the continuum is still open

Moreover, since Wilson's proposal to define non-Abelian gauge theory nonperturbatively via Euclidean lattice regularization \cite{wilson1974renormalization}, lattice gauge theory has become a very important tool for strongly-coupled Yang-Mills. For instance, a simple cluster expansion argument shows that at very strong coupling, lattice Yang--Mills theories exhibit a mass gap and area law \cite{osterwalderseiler1978}, and more recently gauge--string duality and stochastic quantization based approaches weakened the requirement on the coupling \cite{shen2023stochastic,cao_expanded_2025,CNS2025b,Nissim25}. In parallel, the large-$N$ limit has has become a complementary organizing principle for nonperturbative Yang-Mills, where lattice computations produce quantitative evidence that many infrared observables admin a smooth extrapolation as $N= \infty$. The pure Yang--Mills $N=\infty$ limit has also recently been proven to exist and has been described rather explicitly in terms of rigorously defined surface sums or string trajectories \cite{Chatterjee2019a,BCSK2024}. These conceptual and computational advances underpin increasingly precise nonperturbative determinations of Yang–Mills observables (e.g. glueball spectroscopy and related continuum extrapolations) \cite{MorningstarPeardon1999}. 

From a constructive perspective, lattice gauge theory is compelling because it supplies an ultraviolet-regularized, gauge-invariant definition of the theory at finite lattice spacing; the central mathematical challenge is to understand (and ultimately prove) the continuum limit and its universality class. In four dimensions this is intimately tied to the Clay Millennium Yang–Mills existence and mass gap problem \cite{jaffe_quantum_2000}. What is missing is a fully rigorous control of the renormalization/continuum-limit map for non-Abelian gauge theories that would (i) construct a continuum Wightman/OS-compatible theory from the lattice, (ii) identify the emergent long-distance physics (confinement, mass gap, string tension), and (iii) prove that physically relevant observables are independent of microscopic discretization choices (universality). Even at a more structural level, lattice gauge theory exposes deep issues mathematicians can engage with: reflection positivity and reconstruction (for connecting Euclidean and Minkowski formulations), the analytic structure of RG flows, and the subtle interplay between topology and locality (e.g. topological charge sectors becoming sharply separated in the continuum as illuminated by Wilson flow) \cite{Luscher2010}. Finally, the fermionic sector poses “mathematics-first” problems: implementing chiral symmetry and anomalies without doublers \cite{GinspargWilson1982,Luscher1998,Neuberger1998} is now conceptually clean but still raises hard questions about locality, index theorems at finite cutoff, and extensions toward genuinely chiral gauge theories. A classic reference framing many of these issues explicitly as problems in constructive QFT/statistical mechanics is Seiler’s monograph \cite{Seiler1982}, and the foundational lattice formulation and its statistical-mechanical structure are laid out in Wilson/Kogut \cite{Wilson1974,Kogut1979}.

\subsection{Related work}
\label{ss:related}

While we know of no previous formalizations of constructive QFT, a significant related project is the
Brownian motion project of Remy Degenne {\it et al} \cite{degenne_formalization_2025}, 
{\tt RemyDegenne/brownian-motion} .  This contains the Kolmogorov Extension theorem, in  
{{\tt RemyDegenne/kolmogorov{\textunderscore}extension4}}, which is imported by our proof of Minlos.

The two axioms we initially introduced in our formalization, Minlos theorem and one to prove nuclearity of Schwartz space, were proven in our subsequent formalization. Moreover, the dependence on Goursat's theorem was also removed in our subsequent formalization.
At this writing there are three axiom-free constructions.
Matteo Cipollina
has independently formalized the proof of nuclearity and a construction of the Gaussian measure to obtain an axiom-free 
proof of OS in {\tt or4nge19/OSforGFF}.
The first author (MRD) has independently formalized the proofs of these theorems and
adapted our formalization to use them, in the repos {\tt mrdouglasny/gaussian-field} and {\tt GFF4D}.
In addition the repo {\tt mrdouglasny/bochner} has proofs of the Bochner and Minlos theorems, which
are imported by our final released construction in {\tt mrdouglasny/OSforGFF}.   
The original construction with 3 axioms is preserved in the ``original'' branch of that repo.

\section{Axiomatic QFT and Constructive Frameworks}

To rigorously prove that a QFT exists, one must define the properties that a mathematical object must possess to be considered a QFT,
and then construct such a mathematical object.
In this section, we will give an overview of some of the most influential approaches to developing an axiomatic definition on a QFT. 

In physics, a QFT can be defined in several ways: a quantum mechanical system with Lorentz covariance and causality; a path integral over fields
on Minkowski space-time; the related path on Euclidean space-time obtained by Wick rotation; and other possibilities.  Each way can be codified
by an axiomatic approach, in which one specifies the data (or set of observables) that determine the QFT, and a short list of axioms the data must
satisfy.  The axioms include restatements of the physical principles and consistency conditions,
and additional ``technical'' assumptions which facilitate the mathematical arguments.
One can then prove reconstruction theorems which, given the QFT data of one axiomatic approach, show that the data in another approach
is uniquely determined and satisfies the axioms of that approach.

In the rest of the section we will survey some of the most influential approaches to axiomatizing QFT and explain their relations.
The three main approaches are the Wightman axioms \cite{garding_wightman_1964}, the Osterwalder-Schrader (OS) axioms \cite{OS1972}, and the Haag-Kastler axioms \cite{haag_kastler_1964}. Both the Wightman and Haag-Kastler axioms formulate a QFT in Minkowski spacetime, the first in terms of correlation functions and the second in terms of $C^*$ operator algebras. The OS axioms axiomatize QFT in Euclidean space-time.  Other important approaches are the CFT approaches using operator product expansions or sewing axioms, and factorization algebras.

The most direct approach is to formulate the QFT in the usual language of quantum mechanics, with a Hilbert space, a Hamiltonian and field operators.  These must satisfy the Wightman axioms, which express Lorentz invariance, microscopic causality, and the existence of a vacuum state. Physically,  this is the original definition of QFT; the other approaches are justified by their relation to it. Assuming the Wightman Axioms, one can rigorously prove the CPT theorem and the spin-statistics theorem \cite{streater_wightman_pct_1964,jost_general_theory_1965}.

Given field operators and a vacuum state, one can define the vacuum expectation value of a product of $n$ operators,
the Wightman distributions on $M^n$.  The Wightman reconstruction theorem shows that this data uniquely determines the original QM formulation \cite{garding_wightman_1964}. Wightman distributions can be analytically continued to complex time and thus to Euclidean spacetime. The analytically continues objects are called Schwinger functions, which are non-singular on n-tuples of points in $M^d$ which are pairwise distinct \cite{OS1972}. Schwinger functions are the defining observables of Euclidean QFT, and can be summarized in a generating function  (\ref{eq:genfn}).

In the Euclidean approach to constructive QFT one typically requires the Schwinger functions, or equivalently a generating functional that determines the Schwinger functions to satisfy the Osterwalder-Schrader axioms \cite{OS1972}. The corrected Osterwalder-Schrader reconstruction theorem \cite{OS1975} shows that the Schwinger functions obeying the Euclidean axioms together with an additional growth condition are sufficient to ensure temperedness and admit an analytic continuation to Minkowski-space Wightman distributions which satisfy the Wightman axioms. Hence, to be precise the 1975 reconstuction theorem revises the original 1973 Osterwalder-Schrader formulation by adding an extra regularity/growth input. In this framework, reflection positivity is the crucial Euclidean remnant of Hilbert space positivity after Wick rotation. A stronger but sufficient framework is the Glimm and Jaffe \cite{glimm2012quantum} measure theoretic formulation, which assumes a Euclidean probability measure on a space of distributions and imposes stronger regularity conditions adapted to constructive proofs. While this is more restrictive than the abstract Schwinger function setup, it is often more technically convenient in application, and certainly more friendly to formalization efforts. 

In Haag-Kastler (or algebraic QFT) one replaces point-like fields by a net of local observable algebras where to each suitable spacetime region one assigns a $C^*$ algebra (for a fixed representation of a von Neumann algebra) of observables localized in spacetime \cite{haag_kastler_1964}. In a vacuum representation one further assumes a vacuum vector that is cyclic for the quasilocal algebra together with the positive conditions for the translation group. 

In conformal field theory (CFT), it is often natural to take as fundamental the spectrum of local operators (e.g.\ primaries) and their operator product expansion (OPE).
Schematically, for local operators $\mathcal{O}_i$ one posits an expansion
valid inside correlation functions, where conformal symmetry fixes the spacetime dependence
of $C_{ij}{}^{k}$ up to numerical OPE coefficients and representation-theoretic data
\cite{BPZ1984, PolandRychkovVichi2019}. In unitary CFTs one can make this precise: the OPE and conformal-block decomposition
converge inside correlators in an appropriate domain, with a finite radius governed by the operator-insertion
geometry \cite{PappadopuloRychkovEspinRattazzi2012}.
Consistency of performing OPEs in different channels for a four-point function yields crossing symmetry constraints (bootstrap equations), which can be treated analytically
in special cases and numerically in general \cite{RattazziRychkovTonniVichi2008,PolandRychkovVichi2019}.

In two dimensions one can also formulate CFT globally via sewing/gluing axioms. Segal’s functorial axioms encode the assignment of state spaces to boundary circles and amplitudes (linear maps) to Riemann surfaces with parametrised boundaries, subject to compatibility
with gluing of surfaces \cite{Segal1988}. Related sewing constraints provide field-theoretic control of how
correlators behave under cutting and gluing, and play a key role in rational CFT and the Moore--Seiberg
consistency conditions \cite{MooreSeiberg1989}.

 The primary reason we formalized Glimm-Jaffe \cite{GJ1987Axioms} rather than other approaches is that their axiomatic framework turns the problem of defining a QFT into that of defining a functional measure (a path integral), which in many interesting cases (scalar fields with a real action, pure Yang-Mills theory) is a probability measure (real and non-negative). 
 Measure theory and probability are relatively well covered by Lean Mathlib. 
 By contrast, Mathlib currently lacks a sufficiently developed theory of operators on Hilbert spaces, 
 as well as sufficient complex analysis and Riemann surface theory for the 2d CFT approaches.

In this paper, we constructed the free scalar field using the GJ construction and proved that it satisfies the given axioms. This is an important proof of concept that given the existing Lean libraries (Mathlib) and AI tools, such construction is already possible, and will only be more efficient in the future. Next milestones would be to prove the OS reconstruction theorem \cite{glimm2012quantum}, to construct free fermions, and interacting theories in two dimensions.

\section{Introduction to the mathematics}
\label{s:math} \label{ss:framework}

We generally follow the notation of Glimm and Jaffe \cite{GJ1987Axioms}, and our precise notations are in appendix \ref{app:notation}. 
We treat a free massive scalar field in $D=4$ Euclidean dimensions;
while the strategy and arguments work in any dimension, some results on the covariance would require adaptation for other dimensions.

In physics terms, we want to define a functional measure which naively looks like
\begin{equation}\label{eq:gff-path-integral}
    d \mu[\phi] \sim \prod_x d\phi(x) \exp (-S[\phi])
\end{equation}
with the action 
\begin{equation} \label{eq:gff-action}
    S[\phi] = \frac{1}{2} \int_M d^Dx\, (\partial\phi)^2 + m^2\phi^2
\end{equation}
with $m^2>0$. This is naive as Eq. \eqref{eq:gff-path-integral} {\it cannot} be interpreted as a product of an infinite dimensional Lebesgue measure 
$\prod_x d\phi(x)$ with the weight $\exp( -S[\phi])$ as the first of these does not exist.  Rather, the measure
must be constructed by other means.

The Glimm-Jaffe axioms are stated in terms of the generating functional of correlation functions,
\begin{equation} \label{eq:genfn}
    S(J):= \int e^{i \int_M d^Dx\, J(x) \phi(x) } d \mu[\phi] .
\end{equation}
In a physics treatment we could immediately write it down for the free field,
\begin{equation} \label{eq:freegenfn}
    S(J):= \int e^{ic \int_{M\times M} d^Dx\, d^Dy\, J(x) K(x-y) J(y) } 
\end{equation}
where $K(x-y)$ is the propagator (Green function for $\Delta+m^2$) or covariance (in the language of probability theory).

However, from a rigorous and even more so from a formal perspective, we have things to define and prove. What is $K$?
Under what conditions on $J$ does Eq. (\ref{eq:freegenfn}) make sense?
But the first order of business is to show that there exists a measure $d \mu[\phi]$
for which Eq. (\ref{eq:genfn}) equals Eq. (\ref{eq:freegenfn}).
Conceptually this is not difficult; essentially, we want to diagonalize $K(x-y)$
and define $d\mu$ as the product of one-dimensional Gaussians over its
eigenspaces.  However there are some technical obstacles to doing this which we discuss below and which led us (in version 1) 
to appeal to
a general existence theorem due to Minlos for this step.

Suppose we have this measure, next we need to show that this statistical
field theory can be related to a true quantum field
theory with a Hilbert space, Hamiltonian, and Lorentz symmetry.  This follows from the 
Osterwalder-Schrader reconstruction theorem \cite{OS1973} (and subsequent work \cite{OS1975}).
In these works it was shown that this will be the case if Eq. (\ref{eq:freegenfn}) satisfies a short
list of ``OS'' axioms.  There are variations on the original axioms, we follow the
set in Glimm-Jaffe.
\begin{itemize}
    \item OS0: Analyticity- The functional $S (f) $ is analytic. More precisely for every finite set of test functions $f_j \in \mathcal{D}(\R^D)$ with $j= 1,2, \ldots, N, $ and taking $\vec z = ( z_1, \ldots, z_n) \in \C^n$, the function 

\begin{equation}
    (z_1,\dots,z_n) \rightarrow S \left(\sum_{j=1}^n  z_j f_j \right)
\end{equation}

is entire on $\C^N$.

    \item OS1: Regularity-
There exists a real number \(p\) with \(1\le p\le 2\) and a constant \(c>0\) such that, for every test function $f \in \mathcal{D}(\R^n)$,
\begin{equation}\label{eq:OS1}
\bigl|S\{f\}\bigr|
\;\le\;
\exp\!\Bigl[c\bigl(\|f\|_{L^{1}}+\|f\|_{L^{p}}^p\bigr)\Bigr].
\end{equation}

If \(p=2\), we further assume that the two‑point Schwinger function
\[
S_{2}(x,y)\;:=\;
\left.\frac{\partial^{2}}{\partial z_{1}\partial z_{2}}\,
S\!\bigl\{\,z_{1}\delta_{x}+z_{2}\delta_{y}\bigr\}\right|_{z_{1}=z_{2}=0},
\]
is locally integrable in $(x,y)$ and is defined in our generating functional, $S(f) = \int e^{i \phi(f)} d \mu$. Let $d \mu$ be a probability measure on $\mathcal{D}'(R^D)$ which satisfies OS0. Then the measure $d\mu$ has moments of all order. The nth-moment has a density $S_n(x_1, \ldots, x_n) \in \mathcal{D}'(R^{D})$ which means 
\begin{equation}
    \int \phi(f_1) \ldots \phi(f_n) d \mu = \int S_{n}(x_1, \ldots x_n) \prod_{i=1}^n f_i (x_i) dx 
\end{equation}
is locally integrable in \((x,y)\in\mathbb{R}^{D}\times\mathbb{R}^{D}\).Note that for $p=2$ then $||f||_{L^2}$ gives us a Hilbert space norm which is useful for Fourier analysis and two-point functions. 

    \item OS2: Euclidean Invariance- $S(f)$ is invariant under Euclidean symmetries (translations, rotations and reflections). More explicitly, for any such Euclidean isometry of $\R^D$, $E$, we have,
    \begin{equation}
        S(f)= S(Ef).
    \end{equation}

    Equivalently, $d\mu$ is Euclidean invariant  in the sense that $\mu = E_* \mu$ where $E_*\mu$ denotes the pushforward of $\mu$ under $E$.
\end{itemize}

Before moving on to OS3 and OS4, we isolate a class of functions on  $\mathcal{D}'(R^D)$,
      \begin{equation}
        \mathcal{A} = \{ A(\phi)= \sum_{j=1}^N c_j \exp (\phi(f_j)):\, c_j \in \C,\, f_j \in \mathcal{D}'(\R^D))\}.
    \end{equation}

\begin{itemize}
    \item OS3: Reflection Positivity- 
    The Euclidean symmetries act on $\mathcal{D}'(R^D)$, so that $E \phi(f) = \phi(Ef)$
    , which defines a unitary condition action on the $L^2(\mu)$. 
     Finally, let $\mathcal{A}^+ \subset  \mathcal{A} $ be the set of functionals where $f_j \in C_0(\R_{+}^D)$ and $C_0(\R_{+}^D)$ is the space of continuous functions that vanish at infinity in the positive half space. $\R_+^D := \{ (t, \vec{x}): t > 0\}$ so now if we let $\theta: (t , \vec{x}) \rightarrow (-t, \vec{x})$ be the time reflection. Then the content of this axiom states that the inner-product inequality is satisfied:  
     \begin{equation}\label{eq:OS3}
         \int (\theta A)^- A\,d\mu=\langle \theta A, A \rangle_{L^2(\mu)}\geq 0.
     \end{equation}
   
    \item OS4: Ergodicity-  The time translation subgroup $T(t)$ acts ergodically, meaning it upholds the uniqueness of the vacuum, on the measure $d\mu$. The time translation subgroup in Euclidean space is the one parameter group of pure time shifts 
    \begin{equation}
        T(t) : x = (t, \vec{x})\rightarrow (t+ t', \vec{x}) \qquad t'\in \R 
    \end{equation}
so that $T(t+s)= T(t) T(s)$. 

This is equivalent to saying that for all $L^1(\mu)$ functions $A(\phi)$, 

\begin{equation}\label{eq:Glimm-Jaffe-OS4}
    \lim_{t \rightarrow \infty} {1 \over t} \int_{0}^{t}T(s) A(\phi) T(s)^{-1} ds = \int A (\phi) d \mu(\phi),
\end{equation}
where the limit above should be taken in the $L^1$ sense.

\end{itemize}

\begin{remark}[Schwartz functions vs compactly supported test functions]
    In OSforGFF, instead of working with the spaces $\mathcal{D}(\R^D)$ and $\mathcal{D}'(\R^D)$, our axioms and proofs were stated in terms of Schwartz functions $\mathcal{S}(\R^D)$ and tempered distributions $\mathcal{S}'(\R^D)$. Since $\mathcal{D}(\R^D) \subset\mathcal{S}(\R^D)$ and $\mathcal{S}'(\R^D) \subset\mathcal{D}'(\R^D)$, our axioms imply the ones stated in Glimm and Jaffe.
\end{remark}

\begin{remark}[OS4 Ergodicity Statement]
    When Glimm and Jaffe write $T(s) A(\phi) T(s)^{-1}$, they interpret $A(\phi)$ and $T(s) A(\phi) T(s)^{-1}$  as linear operator on the space of functions $\mathcal{D}'(\R^D) \to \C$ where $A(\phi)$ is a multiplication operator. Unpacking the definitions one can check that \eqref{eq:Glimm-Jaffe-OS4} is equivalent to
    \begin{equation}
        \lim_{t \rightarrow \infty} {1 \over t} \int_{0}^{t}A(T(s)\phi) ds = \int A (\phi) d \mu(\phi),
    \end{equation}
    where we now treat $A$ simply as a function on $\mathcal{D}'(\R^D)$.
    
    In {\tt OSforGFF}, the OS4Ergodicity axiom is stated only for functions in $\mathcal{A}$ instead of for all $L^1(\mu)$ functions, and we use $L^2$ convergence instead of $L^1$ convergence. Our axiom implies Glimm and Jaffe's axiom, since a general functional analytic argument shows that $\mathcal{A}$ is dense in $\mathcal{S}'(\R^D)$ with respect to $L^1$ norm, and $L^2$ convergence implies $L^1$ convergence on probability spaces.
\end{remark}

\begin{remark}[OS4 Clustering]
    In OSforGFF, we also include two alternate stronger versions of the OS4 called OS4Clustering and OS4PolynomialClustering which are stated below. As before $S[f]$ denotes the generating functional of the measure $\mu$
    \begin{enumerate}
        \item (OS4Clustering) For any $f,g \in \mathcal{S}(\R^D)$,
        \begin{equation}\label{eq:OS4-Clustering}
            S[f+T(s)g]-S[f]S[g] \to 0
        \end{equation}
        as $s \to \infty$.
        \item (OS4PolynomialClustering) For any $f,g \in \mathcal{S}(\R^D)$, and any $n>0$, there exists a constant $C_{f,g,d,n}$
        \begin{equation}\label{eq:OS4-Poly-Clustering}
            |S[f+T(s)g]-S[f]S[g]| \leq C_{f,g,d,n} (1+|s|)^{-n}.
        \end{equation}
    \end{enumerate}
    Of course OS4PolynomialClustering implies OS4Clustering. We will sketch in Section \ref{subsection:OS4} why these clustering variations imply the original ergodicity axiom.
\end{remark}

\section{Details of the formalization}
\label{s:details}

Here we comment on some choices we made and points where
we encountered technical difficulties. The full proof is given in appendix \ref{app:osproof}. We strongly encourage the reader interested in more detailed explanations
to clone our repo, fire up a coding agent and ask it your questions.

\subsection{Construction of Gaussian Free Field}
\label{ss:construct}

We construct the Gaussian free field (GFF) as a generalized random field by applying Minlos' theorem to a canonical Gaussian characteristic functional on a nuclear space of test functions. An equivalent ``mode expansion'' construction as an infinite sum of independent Gaussian random variables, which is more familiar to physicists, is explained in \ref{ss:defmot}.

Let $E$ be a real nuclear locally convex topological vector space and let $E'$ denote its continuous dual. We write $\phi(f)$ for the canonical pairing $\phi\in E'$, $f\in E$. A function $\chi:E\to\mathbb C$ is called \emph{positive definite} if for every $n\in\mathbb N$, every $f_1,\dots,f_n\in E$, and every $c_1,\dots,c_n\in\mathbb C$,
\[
\sum_{i,j=1}^n c_i \overline{c_j}\,\chi(f_i-f_j)\ \ge\ 0.
\]

\begin{theorem}[Minlos]
Let $E$ be nuclear. If $\chi:E\to\mathbb C$ is continuous, positive definite, and normalized by $\chi(0)=1$, then there exists a \emph{unique} Borel probability measure $\mu$ on $E'$ such that for all $f\in E$,
\[
\chi(f)\ =\ \int_{E'} e^{\,i\phi(f)}\, d\mu(\omega).
\]
Equivalently, every continuous positive-definite functional on a nuclear space is the Fourier transform (characteristic functional) of a unique probability measure on the continuous dual.
\end{theorem}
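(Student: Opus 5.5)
The plan is the classical route: build projective finite-dimensional marginals with Bochner, glue them with Kolmogorov extension on the algebraic dual, and then use nuclearity to push the measure onto $E'$.

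\emph{Step 1: finite-dimensional marginals.} For each finite family $f_1,\dots,f_n\in E$, the function $t\mapsto \chi(\sum_k t_k f_k)$ on $\R^n$ is continuous, positive definite and equals $1$ at $0$. Bochner's theorem therefore gives a unique probability measure $\mu_{f_1,\dots,f_n}$ on $\R^n$ whose Fourier transform is this function. Uniqueness of Fourier transforms makes the family consistent under linear maps between finite families. Hence we obtain a cylinder probability on the algebraic dual $E^*$: a finitely additive measure on cylinder sets, countably additive on each finite-dimensional $\sigma$-algebra.

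\emph{Step 2: extension to $E^*$.} Kolmogorov's extension theorem, applied to $\R^E$ with the linear constraints $\phi(af+bg)=a\phi(f)+b\phi(g)$ holding almost surely, gives a measure on $\R^E$ with these marginals. I would actually realize it on $\R^{E_0}$ for a countable dense $\mathbb{Q}$-subspace $E_0$, so that the linear relations are only countably many conditions.

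\emph{Step 3: concentration on $E'$.} This is the main obstacle, and it is where nuclearity enters. Fix $\varepsilon>0$. Continuity of $\chi$ at $0$ gives a continuous Hilbertian seminorm $p$ with $\mathrm{Re}(1-\chi(f))\le\varepsilon+p(f)^2$. Nuclearity gives a second continuous Hilbertian seminorm $q\ge p$ such that the canonical map $E_q\to E_p$ is Hilbert--Schmidt. Choose vectors $e_k\in E$ that are $q$-orthonormal and whose span is $q$-dense. The Gaussian-averaging trick then gives the estimate: integrate $1-\chi(\sum_k t_ke_k)$ against the standard Gaussian on $\R^N$ scaled by $\sigma$. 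This yields
\[
\int\Bigl(1-e^{-\frac{\sigma^2}{2}\sum_{k\le N}\phi(e_k)^2}\Bigr)\,d\mu\le\varepsilon+\sigma^2\sum_{k}p(e_k)^2,
\]
and the sum on the right is a Hilbert--Schmidt norm, hence finite. Letting $N\to\infty$ and then choosing $\sigma$ small, the set $\{\phi:\sum_k\phi(e_k)^2\le R\}$ has measure at least $1-C\varepsilon$ for a suitable $R$. On this set $\phi$ is bounded by $\sqrt{R}\,q$ on $E_0$, so it extends to a continuous functional on $E$. Taking the union over countably many $\varepsilon$ shows that the outer measure of $E'$ is $1$. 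For this to work I would assume $E$ is countably Hilbert (true for Schwartz space), so that a countable sequence of seminorms suffices.

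\emph{Step 4: transfer and uniqueness.} We transfer the measure to $E'$, using the Borel $\sigma$-algebra generated by the evaluations, which for Schwartz space coincides with the weak-$*$ Borel sets. By construction, $\int e^{i\phi(f)}\,d\mu=\chi(f)$. For uniqueness: two measures with the same characteristic functional have the same finite-dimensional marginals by Bochner uniqueness. Cylinder sets form a $\pi$-system generating the $\sigma$-algebra, so Dynkin's $\pi$--$\lambda$ theorem shows the two measures are equal.
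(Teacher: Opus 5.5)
The paper contains no written proof of this theorem. Minlos was an axiom in the original release and was later discharged by a formal proof built on Bochner's theorem and the Kolmogorov extension theorem. So your architecture (Bochner marginals, Kolmogorov extension, nuclearity-driven concentration, $\pi$--$\lambda$ uniqueness) is the expected one, and your Gaussian-averaging inequality in Step 3 is correct.

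The gap is the next sentence: ``On this set $\phi$ is bounded by $\sqrt{R}\,q$ on $E_0$.'' The event $\{\sum_k\phi(e_k)^2\le R\}$ controls $\phi$ only on the algebraic span of the $e_k$. A general $h\in E_0$ is merely a $q$-limit of such combinations, and passing to that limit inside $\phi$ presupposes that $\phi$ is $q$-continuous, which is exactly what Step 3 is meant to prove. Nothing in your construction even puts the $e_k$ in $E_0$, so $\phi(e_k)$ need not be a coordinate on $\R^{E_0}$.

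The inference genuinely fails for admissible choices. If $\phi_0\in E'$ is not $q$-continuous, then $\ker\phi_0$ is $q$-dense, and you may take every $e_k\in\ker\phi_0$. The slack $\varepsilon$ in $\mathrm{Re}(1-\chi)\le\varepsilon+p^2$ lets $\mu$ put mass $\eta\le\varepsilon$ on multiples of $\phi_0$. For example, on $\mathcal S'(\R)$ take $\mu=(1-\eta)\,\mu_{\mathrm{wn}}+\eta\,\mathrm{Law}(\xi\,\delta_0'')$, with $\mu_{\mathrm{wn}}$ white noise, $\xi\sim N(0,1)$, $p=\|\cdot\|_{L^2}/\sqrt2$ and $q(f)=\|(x^2-\partial_x^2)f\|_{L^2}$. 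Every such multiple satisfies $\sum_k\phi(e_k)^2=0$ while being $q$-unbounded on $E_0$.

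What is missing is that the orthonormal family must be adapted to $E_0$, with the Hilbert--Schmidt bound used uniformly over all such families. Run the Gaussian averaging on each finite-dimensional $V\subset\mathrm{span}_{\R}E_0$, using a $q$-orthonormal basis $g_1,\dots,g_m$ of $V$. Every $q$-orthonormal family satisfies $\sum_k p(g_k)^2\le\|\iota_{q,p}\|_{\mathrm{HS}}^2$, so you get $\mu\bigl(\sup_{v\in V,\,q(v)\le1}|\phi(v)|>\sqrt R\bigr)\le C\varepsilon$ uniformly in $V$. Intersecting over an increasing sequence of such $V$ exhausting $E_0$ then gives $\mu\bigl(|\phi|\le\sqrt R\,q\ \text{on}\ E_0\bigr)\ge1-C\varepsilon$. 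Equivalently, build the $e_k$ by Gram--Schmidt from an enumeration of $E_0$.

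This repair needs $\phi$ to be a.s.\ $\R$-linear on $\mathrm{span}_{\R}E_0$, not merely $\mathbb{Q}$-linear on $E_0$. That holds because there are only countably many real linear relations among elements of $E_0$, and each holds a.s.\ by your characteristic-function argument. If $q$ is only a seminorm, give the $\ker q$ directions a large Gaussian variance; this costs nothing because $p\le q$ forces $p=0$ on $\ker q$. With this correction the rest of your argument goes through.

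Two smaller points:
\begin{itemize}
\item For $f\notin E_0$, the identity $\int e^{i\phi(f)}\,d\mu=\chi(f)$ comes from dominated convergence and continuity of $\chi$, not from the construction.
\item Countably many seminorms already come from taking countably many $\varepsilon$, so the countably Hilbert hypothesis is not what you need there. What your argument really uses is separability of $E$ and, in Step 4, the coincidence of the cylinder and Borel $\sigma$-algebras on $E'$. That coincidence holds for $\mathcal S'$ but is not automatic for a general nuclear $E$.
\end{itemize}
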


Fix a domain or manifold $M$ and choose a space of test functions $E$ (in our case $E=\mathcal S(\mathbb R^d)$). Let
\[
C:E\times E\to \mathbb R
\]
be a continuous, symmetric, positive semidefinite bilinear form, interpreted as the \emph{covariance form}. The GFF is defined by
\begin{equation}
    C(f,f)\equiv \int \frac{m}{4\pi^2|x-y|}K_1(m|x-y|)f(x) f(y) \,dx\,dy
\end{equation}
and 
\begin{equation}
    \chi(f)=\exp\left(-\frac{1}{2}C(f,f)\right).
\end{equation}

\subsection{Schwartz Test Functions}

Lean Mathlib already has the definition of Schwartz functions and many of their properties.
Examples of properties we had to prove include double mollifier convergence and various Fubini lemmas. The former allows the definition of Schwinger two-point functions, which is an integral involving a singularity at $x=y$ in $K(x,y)$. The Fubini lemmas are necessary to exchange orders of integration, e.g. spacetime integral and momentum integral, and they require the absolute convergence of integrands. The convergence of regular Schwartz function integrals are easy as we can impose arbitrary polynomial bound; yet, there are subtleties when the free covariance (Green's function) is involved, which is discussed in appendix \ref{app:osproof}. Some of these are general results which could be ``upstreamed'' into Lean Mathlib (work in progress).

\subsection{Motivation for the definitions}\label{ss:defmot}

For a physics reader there is already a lot to unpack and motivate, in particular, the function spaces and other unfamiliar terminology.  Where does this come from?  
We will not try to give a complete answer here, but let us explain the motivation for the use of Schwartz spaces, the concept of Gel'fand triple, and the
Minlos theorem.

First, what is $d\mu(\phi)$?  Mathematical probability theory is based on
measure theory (Kolmogorov); the probability that a random variable $x$ will
take values in a set $S$ is $\int_S d\mu(x)$.  For us the random variable is the
field configuration $\phi(x)$, which (naively) is a function from spacetime $M$ to $\IR$.
We will define the set of all of its possible values as the configuration space $\CC$
(the type {\tt FieldConfiguration} in our formalization).

The way that physicists usually construct an infinite-dimensional field space is to use a series of finite dimensional spaces $\CC_N$, which approximate $\CC$ arbitrarily well as $N\rightarrow\infty$.  A standard choice is to discretize by only keeping the function values at lattice points $\IZ^D\cdot a$
for some lattice spacing $a$, and  take (say) $a=2^{-N}$.  We would also have to take the
total volume finite and explicitly take the infinite volume limit.
We could then define the
measure by taking the product Lebesgue (uniform) measure over all the lattice points,
approximate the action in a standard way say by taking
$\partial\phi \rightarrow (\phi(x+a)-\phi(x))/a$, and then write an
explicit Gaussian weight.  This is perfectly rigorous and a viable approach,
why should we consider alternatives?

One answer is that proving the existence of the limiting measure is nontrivial, and this does not save as much work as one might think.
Another is that, while approximations are not intrinsically problematic, it is preferable that they preserve the very properties one ultimately wishes to prove. In particular, a lattice approximation breaks the full Euclidean invariance appearing in OS2 from the outset. Since there are no nontrivial finite-dimensional representations of the Euclidean group, preserving OS2 naturally leads one to work with measures on infinite-dimensional function spaces. 

A conceptually simple construction, sufficient at least in the Gaussian case, is to diagonalize the covariance (equivalently, the quadratic kinetic form) and write
\begin{equation} \label{eq:infbasis}
    \phi(x) = \sum_{k\in\mathbb{N}} c_k \phi_k(x),
\end{equation}
where the $c_k$ are independent centered Gaussian random variables of variance $1$, and the mode functions $\phi_k$ are chosen so that
\[
K(x,y)=\sum_{k\in\mathbb{N}} \phi_k(x)\phi_k(y)
\]
reproduces the desired covariance kernel.

However, this already raises a nontrivial issue: the covariance is represented by an infinite sum or integral which need not converge pointwise. In the continuum theory this is not an accident but a manifestation of ultraviolet singularity. Indeed, for the free massive field one has in momentum space
\[
K(p)=\frac{1}{p^2+m^2},
\]
and in $D=4$ this is not trace class, since formally
\[
\int_{\mathbb{R}^4}\frac{d^4p}{p^2+m^2}=\infty.
\]
Correspondingly, the position-space covariance $K(x,y)$ is singular on the diagonal, so in particular $K(x,x)$ diverges. Thus the Gaussian field cannot be realized as an ordinary random function, but only as a random distribution. 

One might think that this prevents constructing
such a measure, and this would be the case if we required it to have
support on continuous functions, or even in the Hilbert space of functions on spacetime $L^2(\IR^4)$.
This corresponds to the (probably familiar) fact that a typical value of $\phi(x)$
in dimension $D\ge 2$ is not a continuous function.

The solution to this problem is to broaden the support of the measure to 
distributions, i.e. linear functionals on some restricted space of test functions.
What space?  We need test functions whose values can be freely chosen on any finite set of points;
we need them to have finite integrals against $K(x-y)$, and we need their integral over all of spacetime
to be finite (since $K$ is translation invariant). 

The space $\mathcal{S}(\mathbb{R}^D)$ of Schwartz test functions 
is a common choice to satisfy these requirements, and is well implemented in Mathlib.
These are functions with rapid (faster than polynomial)
decay in both position and momentum space, and the spaces of Schwartz test functions and of distributions are preserved under Fourier transform.
One can describe them concretely using a basis of Hermite functions (normalized harmonic oscillator number states) in an 
expansion analogous to Eq. (\ref{eq:infbasis}),
\begin{equation} \label{eq:hermbasis}
    \phi(x) = \sum_{k\in\mathbb{N}} c_k \phi_k(x),
\end{equation}
The test functions then have $c_n\rightarrow 0$ faster than any polynomial in $n$.
The dual Schwartz distributions
can have $c_n$ growing at most polynomially--to be precise, there exists $k$ such that
\[
\sum_{n\ge 0} |c_n|^2 (1+n)^{-k} < \infty .
\]

While the space of Schwartz distributions $\mathcal{S}'(\mathbb{R}^D)$ is large enough to contain our field configurations,
we still need to show that the sum in Eq. \ref{eq:infbasis} converges with probability one.
This places conditions both on the space (the sum must be over a countable basis) and on the covariance.

A key additional structural fact is that $\mathcal{S}(\mathbb{R}^D)$ is a \emph{nuclear} Fr\'echet space. Informally, nuclearity means that the topology can be described by a nested family of Hilbert norms whose inclusion maps are extremely compact (indeed trace-class in a suitable sense). This is precisely the infinite-dimensional regularity needed to extend finite-dimensional Gaussian intuition to the continuum. In particular, by the Bochner--Minlos theorem, every continuous positive-definite functional on a nuclear space is the Fourier transform of a unique probability measure on its topological dual. Hence a Gaussian characteristic functional of the form
\[
\chi(f)=\exp\!\left(-\frac12\,C(f,f)\right), \qquad f\in \mathcal{S}(\mathbb{R}^D),
\]
defines a probability measure on $\mathcal{S}'(\mathbb{R}^D)$. In this sense, nuclearity plays the role that tightness or compactness plays in finite dimensions: it prevents probability mass from ``escaping to infinity'' when one passes from finite-dimensional approximations to the full infinite-dimensional limit, and thereby underlies the existence of the Euclidean field measure \cite{glimm2012quantum}.

The condition on the covariance can be stated in several ways: one is that it must be continuous in the nuclear topology.   More concretely,
it must be definable by means of a linear operator $T$ which densely embeds the test function space into a Hilbert space,
\be
C(f,g) = \langle T\;f, T\;g \rangle,
\ee
and such that $T$ is a Hilbert-Schmidt operator, $\tr T^\dag T\le\infty$.  This would be false for
$T=(k^2+m^2)^{-1/2}$ considered as a  bounded operator on $L^2(\IR^D)$, 
but if we instead consider it as an embedding of Schwartz test functions
into $L^2$ it is true.   This structure of a test function space with high regularity embedded in a Hilbert space
which itself is embedded into the space of distributions is called a Gel'fand triple (or rigged Hilbert space)
and is foundational for constructive QFT.

For the future (renormalization of interacting theories) 
we want to use smaller spaces of distributions with more regularity.  Despite the seeming abstractness one could
be more concrete about these spaces, for example work on rigorous stochastic quantization uses
Sobolev and weighted Sobolev spaces.  One can still construct the measure
on $S'$ and later argue that the measure has support on a smaller space.  However these function spaces are not yet in Mathlib.

In the original formalization described here, we dealt with these matters by postulating the Minlos theorem and the nuclear property of Schwartz space
as axioms, and carefully checked them by hand. 
As discussed in \S \ref{ss:related}, we now also have axiom-free
proofs which are available in the latest version of our repository.

\subsection{Other infrastructure}

\begin{itemize}
    \item We defined the Euclidean group action in the file \texttt{Spacetime/Euclidean} without much help from Mathlib. We defined the group structure, elements, and their actions on the measure or test functions. Yet these definitions are specified to 4D spacetime and scalar test functions, therefore future work would include rewriting it to standards befitting a foundational element, and generalization to non-scalar fields.

    \item The standard definition of the free covariance is the Fourier transform of the free propagator $1/(k^2+m^2)$, but this is a conditionally convergent integral whose integrability is hard to prove. Instead, we used the Bessel function definition as stated in \ref{eq:cov}.

    \item We used two slightly different definitions for Fourier transform: the physics convention in \texttt{Parseval.lean} and the Mathlib convention in \texttt{General/FourierTransform}, differ by a factor of $2\pi$ in the phase. Although one could stick to one convention, we found it easier and more intuitive to use different conventions when proving different theorems. Unlike symbolic math, the prover will catch the mistake if the wrong one is used, so the mixing of conventions does not cause trouble.

    \item We prove both the positivity required by Minlos and reflection positivity using the Schur-Hadamard theorem, which reduces positivity of the generating functional to positivity of the free covariance. The positivity here is defined in the sense of probability theory, details can be found in appendix \ref{app:osproof}.

\end{itemize}

\section{Methods}
\label{s:methods}

Two of us (MRD and SH) started the project in July 2025 by writing an informal outline of the
math and physics concepts, the OS axioms, and proof strategies.  This was in no sense a ``blueprint''
as it was more sketchy than the published proofs!  Nevertheless we gave it to two models (GPT-5 and
Gemini~2.5) and asked them to translate it to Lean.  The results were not of high quality but they
were helpful in suggesting definitions and theorems which we would eventually need to make.  Still,
the results were better than we had expected given the paucity of Lean code in the training corpus.

We then hand-coded the basic definitions, such as \lstinline{FieldConfiguration} (Schwartz
distributions on space-time), the Euclidean group, and the generating function; everything which was
needed to formulate the OS axioms.  Which we then proceeded to do, again by hand-coding.

From that point on we primarily used coding assistants, at first mostly GPT and Gemini in GitHub
Copilot, with Lean-LSP MCP, which we found to be a significant help.  We also tried Claude Sonnet and
Opus, and while over the course of the fall all of the models improved, in our estimation the
introduction of Claude Code (which launched in VS~Code at the end of September) was the single
biggest jump.  By the end of the project, Claude Code Opus~4.6 (often in the CLI version) was our
go-to coding agent, and it was substantially stronger than the agents we had started the project
with.

We also developed better prompts; one of note deals with the observation that all of the coding
agents seem too willing to abandon a difficult proof and delete their partial work.  We speculate
that while abandoning an unpromising path is often appropriate, their stopping criteria (perhaps
length of time spent?) were tuned to optimize more common coding tasks rather than theorem proving.
Anyways, if one could catch the agent in time, it was useful to say ``Don't give up and delete a
sizable block of code unless you can identify a blocking difficulty, and if you do, explain the
difficulty and lessons learned in the chat.''  This behavior became less of a problem over time,
perhaps due to better agents or better system prompts.

Some parts of the project (such as the Euclidean group action) were forward coded, and others were
elaborated by defining intermediate lemmas by hand and asking the models for proofs.  In attempting
to fill in the proof from there, we developed the backward-chaining strategy by which much of the
project was done: we asked the agents to prove a lemma or OS axiom, but when they ran into
difficulty, rather than leave \lstinline{sorry}s we suggested that the agent propose a simpler
``helper lemma'' which if assumed as an axiom would enable a proof.  Often we gave hints about what
this might be (we were not trying to get an automatic system), but just as often the agent would make
a sensible proposal on its own and formulate it in Lean, and we would accept it.  Except for easy
cases, we added these helper lemmas as Lean axioms rather than \lstinline{sorry}s, so that the model
would stay focused on the original problem, and also added them to a list of axioms to prove later.
As the project filled in this strategy became more and more effective.

Of course, the danger of this procedure is that the conjectured helper lemmas might be false or too
difficult to prove.  Thus it was very important to validate them before proceeding.  In some cases
this could be done by spot checking numerical examples, and we both asked the models to do this and
observed them do it spontaneously.  But our main tool for doing this was cross-validation by the
other models.  A conjecture by Claude would be given to Gemini and GPT (we developed MCP tools to
make this easy) and their comments relayed back to Claude.  Difficult proofs would also employ this
collaborative approach, with the model writing a proof plan which would then be reviewed by the other
models.  Although the relative abilities of the models varied with the versions, we generally found
that Claude was the most effective at coding, with Gemini giving crucial advice as it had a better
and broader understanding of the underlying math and physics.

A variation of the method which we found particularly effective in the later stages was to tell the
agent to look for a helper lemma that is as general as possible, which might appear in a textbook and
which is formulated without using any project-specific definitions.  For example, the statement
should not refer to \lstinline{SpaceTime} but instead to \lstinline{EuclideanSpace R d} (where \lstinline{R} denotes $\mathbb{R}$) and work for
any~\lstinline{d}.  This both significantly decreased the number of false conjectures and also led to
conjectures which were easier to prove.

Several times during our work we discovered that definitions (including our original renditions of the OS
axioms themselves) were incorrect.  A noteworthy example was that we originally defined the
covariance as a Fourier transform of a simple momentum space expression---this is the universal
practice in physics and we did not give it a second thought.  But when a mathematician (RN) joined
the project, he soon noticed that this was a conditionally convergent integral and very problematic
(it could even be interpreted as zero in certain contexts!).  This was soon fixed by introducing an
absolutely convergent definition (in terms of a Bessel function), but the experience led us to
introduce further systematic checks, for example that integrals are absolutely convergent.  The
coding agents made such checks straightforward and were good at implementing special-purpose code
(usually in Python) to speed them up.

The choice of the final axioms to leave as assumptions was made based on prior math knowledge and a
sense of what topics are lacking in Mathlib, as well as repeated proof attempts.  We envisioned
assuming Minlos' theorem from the very start for the reasons discussed earlier.  The other two
``axioms'' (Goursat's theorem and the nuclear property of Schwartz space) were attempted, and the
length of the proof plans and estimated Lean code dissuaded us. 
As mentioned above, these axioms were subsequently eliminated, the first by explicit proof
and the second by redoing the proof of OS0.

It seems to us that for many
projects, assuming major theorems as axioms is a better course than restricting to only allow
statements which can be entirely proven from the foundations.  Of course this has pitfalls which we will discuss below in \ref{pitfalls}.

\subsection{Tips and common pitfalls} \label{pitfalls}

We emphasize that all of the observations in this subsection reflect the state of AI tooling as of
early 2026. Given the pace of improvement, some of these limitations may well be short-lived.
Indeed, perhaps the most striking aspect of this project is how noticeably the models improved over
its course.  When we started, it was a struggle to get anything nontrivial done.  By around December,
proving things was no longer the main difficulty, but the problems described below, false analogies,
context limits, stale documentation, were serious obstacles.  By the time of writing, we are
getting some fairly nontrivial proofs done with minimal guidance, though other proofs remain hard.

\paragraph{Correctness of definitions.}
Large language models have a reasonable intuition for what is ``correct'' or ``conventional'' in
mathematics, and in our experience, if one asks two models independently whether a conjectured axiom
is true, at least one will usually flag an outright false statement---which is why the cross-model
validation protocol described earlier is so valuable.  The more insidious failure mode is subtler: the
models can be led astray by analogies that are nearly but not quite right.  For instance, a model may
confidently apply results about complex matrices to matrices over a Grassmann algebra, or treat a
conditionally convergent integral as though it were absolutely convergent, because these are the
``standard'' settings it has seen most often in training.  Even when given explicit instructions about
how a proof should be structured, the model's prior intuitions can override the guidance, and it will
repeatedly attempt strategies that do not apply in the non-standard setting.  During the Lean
code-writing phase, the most important role for a human in the loop was therefore to ensure that key
definitions were correct and that the mathematical context was not being silently shifted to a more
familiar one.  Subtle slips of this kind are hard to detect by inspection, especially when the types
still line up and everything compiles.  A reliable countermeasure was to constantly build small unit
tests for newly introduced definitions and lemmas.  One can of course feed Lean code back into an LLM
and ask it to translate the formal statement into natural language, but we found that this step alone
did not reliably catch transcription or interpretation errors.  A more robust debugging strategy was
to take a lemma just proved, or a definition just stated, and use it to prove an additional statement
as a unit test, not necessarily something dictated by the original scaffold.  In practice, one such
auxiliary check was usually enough to surface hidden mistakes and force a correction in the relevant
definitions.

\paragraph{Decomposition of long proofs.}

A shortcoming of our formalization is due to the models' limitations at long-form proofs. This is largely a consequence of context limits: the model cannot
simultaneously ``see'' all definitions, previous lemmas, and local proof state for a large development, nor can it reliably output a long proof script without losing earlier constraints. Our formalization's design, logical flow, and readability still requires much improvement to meet the foundational standards of Mathlib or PhysLean. As the project went on, the refactor for quite large sections of the code became much more manageable. One example is the proof of Bochner's theorem, which was done by Claude Code in February 2026 over the space of two days nearly autonomously. However it still is the case that when pushing the models to plan and execute very long horizon proofs the resulting Lean code often falls into the ``merely true'' rather than highly legible, logical, and production ready unless a significant effort is imposed by a human to require such standards.
Still we know of no clear obstacle to further automating this, using better models and giving them better explanations of the standards.

Making new definitions and using them in long proofs remains a challenge.  Human input is
valuable for formulating definitions, planning the high-level decomposition and supplying a proof outline in ordinary
mathematical language; the AI is most effective when asked to fill in individual steps rather than
invent an entire long proof from scratch.  As a result, at least with current tooling,
formalization tends to follow known human proofs, and it is difficult to use these systems to
reliably produce genuinely new arguments that were not already understood at the informal level.

\paragraph{Stale docstrings and unreliable summaries.}
A related issue is that the LLMs understand the codebase largely through docstrings and comments
rather than by genuinely reading the Lean code.  These docstrings are not always up to date: they may
reference definitions or lemmas that have since been deleted, or record earlier proof strategies in
phrases like ``not \ldots\ but \ldots'' that no longer reflect the current approach.  As a
consequence, when we asked an agent to summarize the contents of a file, the resulting markdown
documents were often polluted with stale or irrelevant information inherited from outdated comments.
Worse, the models showed little ability to distinguish the essential components of a proof from
routine bookkeeping: summaries would dwell on trivial details (type coercions, import boilerplate)
while glossing over or entirely omitting the key mathematical ideas.  This made it difficult to rely
on AI-generated documentation for planning or for onboarding new collaborators, and we found that
human-written or at least human-edited summaries remain necessary for any high-level overview of
the proof architecture.  Again, there is no evident reason that these reflect fundamental shortcomings
of the models, but they will need to be dealt with to further automate the procedure.

\paragraph{Unproductive search and the need for human redirection.}
There were theorems for which the model worked for hours, repeatedly adding auxiliary lemmas and
expanding the codebase without making progress.  The model would propose alternative approaches, but
they often failed to resolve the core obstruction. Again, this is a result of the model's inability to identify the core component or central argument of the proof. When a human stepped in with a clearer proof
route, the LLM could usually implement the new direction quickly, and the development then proceeded
smoothly.

\paragraph{Workflow and complexity management}
As the project grew, several difficulties particular to the later stages emerged: completing
successful chains of arguments and not getting lost in the thicket of helper lemmas, keeping track of
which lemmas and axioms are contributing to the main line of the proof, and occasionally keeping
track of two or three proof strategies being simultaneously developed since none had closed.  This
was aided by the tool \texttt{DependencyExtractor}~\cite{cihal2023leangraph}.  But the main
organizational technique we followed was to frequently ask the agents to write and update auxiliary
documents describing the plans being followed to close each active chain of arguments and the current
situation.  For example, we had a file \texttt{axioms.md} which listed the \lstinline{sorry}s and
assumed axioms.  Such organizational methods will no doubt be further automated in the future. Planning more broadly was essential and deserved even more investment than we gave it.

\section{Going forward}
\label{s:going-forward}

Let us outline several directions that we expect to develop in the near future: the formalization of results in mathematical physics, the prospects for formalizing physics
which presently does not exist in mathematically rigorous form, and the further development of
formal mathematical libraries. We also comment on the role of formalization in a hypothetical future in which
AI's have human or superhuman reasoning ability.

\paragraph{Existing results in mathematical physics}
We start by listing a few existing results for which formalization could be in range.
One is the Osterwalder-Schrader reconstruction
theorem, which relates the Euclidean QFT approach we followed to the original Wightman definition of QFT
as quantum mechanics.  Ideally this would use a version of the OS axioms which are both provable and state necessary 
and sufficient conditions for reconstruction, but this is not completely clear for the current axioms
(a recent discussion is \cite{kravchuk_distributions_2021}).  Formalizing the reconstruction theorem could help resolve
these questions, and is being studied by Xi Yin and the first author.

One of the simplest interacting QFTs is $P(\phi)_2$, the 2d boson with a polynomial potential.  This was
constructed in \cite{glimm2012quantum, Guerraetal1975, Simon1974, Nelson1973} and these proofs look
formalizable. In fact, the first author has made progress on it.  More recently the construction was redone using stochastic quantization \cite{Duchetal2024}.
This should be formalizable given the prerequisite stochastic PDE theory.

Rigorous construction of fermionic field theories is generally easier as they do not have the large field problem,
and there are many works on this.  A rigorous treatment of the 2d supersymmetric Wess-Zumino model is in
\cite{jaffe_two-dimensionaln2_1988}.  The 2d abelian Higgs model was treated in \cite{BrydgesFrohlichSeiler1979,BrydgesFrohlichSeiler1980,BrydgesFrohlichSeiler1981}.  
If these results could be combined to construct the supersymmetric gauged linear sigma model,
one could aspire to a direct QFT argument for mirror symmetry.

Another field that is increasingly being formalized is quantum information theory \cite{Meiburg2025}. Recent work has begun to translate central theorems in quantum information into Lean proofs. For example \cite{Meiburg2025}
formally verified the Generalized Quantum Stein's Lemma, a fundamental result in quantum hypothesis testing that gives an operational interpretation of quantum relative entropy in the asymptotic discrimination of quantum states. Their formalization included components from operator algebras, functional analysis and topology. In the process, the formalization clarified several implicit assumptions in the original analytic arguments and led to a more precise formulation of quantum resource theories, while simultaneously contributing to the development of a growing QuantumInfo library \cite{LeanQuantumInfo} in Lean intended to support a broader program of formalizing quantum theory itself.

From the perspective of formal theorists, it would be very appealing to use these ideas to formulate conjectures about quantum gravity as well. One ambitious possible direction would be to clarify the mathematical structure of the gravitational path integral. The gravitational partition function is written as a sum over spacetime geometries, schematically, $Z = \int \mathcal{D}g e^{i S[g]}$, where the integral ranges over metrics modulo diffeomorphisms. However, in gravity the functional measure $\mathcal{D}g$ is not rigorously defined. Moreover, the configuration space itself, remains poorly understood mathematically \cite{Engle2009}.

In particular, the gravitational path integral involves integration over an $\infty-$dimensional space of metrics (and possibly topologies), and the corresponding measure cannot be treated as a straightforward analogue of Lebesgue measure. Determining the correct measure is highly non-trivial in constrained gauge systems such as in gravity \cite{Engle2009}. Consequently, despite a central role in semiclassical and holographic calculations commonly studied in quantum gravity settings, the gravitational path integral is not a mathematically controlled object\cite{Fraaije2022}. Developing a rigorous definition of this measure (or even clarifying what the appropriate configuration space should be) would represent a significant conceptual advance in quantum gravity research.

\paragraph{Non-rigorous theoretical physics}

The low-hanging fruit in the formalization of physics is to improve the verification of calculations and computations.
Many algebraic calculations in physics are essentially rigorous; they need only be placed in the
right physical and mathematical context.  Formalizing this context, even without
proving it all from mathematical foundations, would be valuable if it enables us to find
errors due to incompatible conventions or other misapplications of results.  A famous example
which might have been caught this way was a long-standing disagreement between theory and
experiment of the muon $g-2$, which turned out to be due to a wrong sign convention \cite{Knecht_2002,Hayakawa:1997rq}.

Going the other way, there is little to be gained by formalizing symbolic algebra calculations
in Lean (or the other ITP languages) as a good symbolic algebra system should be 100\%  reliable.
Rather, one could extend the scope of formalization by implementing communication between ITP verifiers
and symbolic algebra systems, and the importation of certificates validating such calculations.
Recent work on this includes
\cite{rowney2026dsleanframeworktypecorrectinteroperability}.

While valuable, these methods do not help much with formalizing theories whose formulations are far beyond
present-day mathematical physics, such as 4d Yang-Mills theory and its supersymmetric extensions.
Given their central role in theoretical particle physics, excluding them is hardly an acceptable limitation.
Rather, ITP will make it far easier to postulate axioms which capture our physicists' current understanding
and intuitions, try them out and see what consequences they imply.

\paragraph{Formalization of more existing mathematics}
The most immediate enabling factor for these goals is the formalization of more existing mathematics. 
Mathlib, while impressive, is still small relative to the body of known mathematics.  In part this is
because it (rightly) maintains
a very high standard of code quality.  As of this writing there are roughly 2,300 pull requests
waiting to be reviewed, each requiring substantial human effort.  
We hope that AI tools will soon
help alleviate this in at least two ways. First by assisting the review process itself, checking style
conventions, verifying clean integration with existing modules, and flagging potential issues before a
human reviewer looks at the PR. Second, by helping authors bring their individual projects into
compliance with library standards before submission, lowering the barrier to contributing and reducing
the burden on maintainers.

A closely related issue is the fragmentation of independent formalization projects.  As more groups
undertake projects like ours, each might start by defining its own versions of the basic objects it
needs, Euclidean transformations, quantum field theory axioms, function spaces, and so on.  This
could lead to many independent, mutually incompatible formalizations of the same
mathematical structures, making it difficult for one project to build on another's work.  While
solving these problems motivates the Mathlib project, it is evidently hard for any single project to
do this across the full breadth of the mathematical sciences.  It is not clear to us that a single
monolithic library (or even one for math and one for physics) is the best or even a scalable solution.
We can compare the present situation with a 
hypothetical world in which (say) the Annals of Mathematics were the only refereed mathematics journal.
The reader can judge the desirability and practicality of this for him or herself.
Perhaps several libraries building on Mathlib in various more specialized directions would be better.
An alternate approach would be to design lightweight but rigorous standards, shared definitions, interface conventions, and
compatibility requirements, along with search tools to find relevant prior work and adopt its conventions from the start.
This would enable independently developed projects to interoperate without
requiring every one to go through a full human review process.
Designing such standards and
building tools that help authors conform to them, strikes us as a very important
infrastructure problem for the formalization community going forward.

\paragraph{Benchmarking formalization progress}
It would be valuable to develop benchmarks that measure formalization progress itself.
Existing benchmarks such as miniF2F~\cite{minif2f} and ProofNet~\cite{proofnet} evaluate how well AI
systems can prove formalized statements, but they measure the capability of the AI, not the state of
human mathematical knowledge that has been made rigorous. What is missing is a way to track how much
of mathematics has actually been formalized, where the remaining gaps
are, and how they are shrinking over time. Some recent work addressing this problem is \cite{constructionverification}.

Here we note an interesting distinction between physics and math.  In mathematics, formalization
proceeds bottom-up: one starts from definitions, builds lemmas, and works toward theorems, always on
solid ground.  Progress can be measured, at least roughly, by how much of the known corpus has been
formalized from the foundations.  In physics the natural direction is often the reverse.  One starts
from a conjecture or a physical intuition---say, that a certain quantum field theory exists and has
specific properties---and works backward to identify the precise mathematical criteria under which
the conjecture holds.  This is essentially the backward-chaining strategy we employed in this
project, elevated to the level of an entire research program.  A meaningful benchmark for physical
formalization would therefore need to capture not just how many theorems have been proved, but also
how far the top-down analysis has reached: how many of the assumptions have been made explicit, how
many have been validated or closed, and how large the remaining gaps are.  Developing such metrics
would help the community track progress, compare approaches, and identify where effort is most
needed.

\paragraph{Further development of AI}

With the rapid improvement of coding agents, it is becoming
possible to formalize entire chapters of textbook mathematics at a
pace that would have been impractical a year ago. 
Progress on the shortcomings of AI-generated code quality and organization mentioned earlier would be quite valuable from
this point of view.
But despite these shortcomings,
if coding agents continue to improve at the current rate, then autoformalization, the complete automation 
of formalization, could well become practical in the near future.
We can hope that this will see rapid application and solve the library problem fairly quickly.

Still, one can imagine that if AI can so easily translate between informal and formal reasoning,
perhaps it will also do 100\% reliable informal reasoning.  
Recent progress can be read both ways: in the 2025 IMO (considered as a challenge for AIs), both types
of reasoning (informal and formal) led to strong performance.
If informal reasoning becomes reliable, will there be any need for formalization?  

One can of course compare this question with the parallel one about mathematical rigor.
And with few exceptions (theoretical physics being one), the historical evolution in the exact sciences
leads towards more rigor: this is quite clear in applied math, statistics and computer science.
All other things being equal, except in the initial exploratory phase of research, the advantages of rigor seem to win out.
This suggests that if formalization becomes easy, it too would win out.

Another analogy, drawn from computer science, is to the differences between interpreted
and compiled languages.  Here the fundamental difference is whether the computer translates the higher level
code to machine code in advance (using a compiler) or as it is being executed (in an interpreter).  This difference
is also reflected in the language -- to be compilable a language must be more rigid, making coding more difficult,
but in return one gains faster and more secure execution as many problems can be detected in advance.  In
practice one usually uses interpreted languages for smaller short-lived projects, and compiled languages for 
larger long-lived software.  The analogy between compilation and formalization is that both methods do more required
work (of translation; of verification) up front, saving the need to do it again later. 
Thus the advantages of formalization will be more significant in working with
large theories and frameworks which undergo sustained mathematical development.   This clearly includes quantum
field theory, suggesting that even if reliable AI informal reasoning about it becomes possible,
formalization will still bring significant advantages.  

\subsection{Advice for now (early 2026)}
\label{ss:advice}

This reflects our impressions and practice as of March 2026 and may go out of date quickly.
\begin{itemize}
    \item To optimize the coding agents' abilities, the most important resource is context -- reading many large files will rapidly exhaust this.
    Keep the project well organized with many supplementary documentation files:
    summaries, plans, lists of sorrys or custom axioms remaining to prove, and so on.  We find that markdown (md) format is better than latex or pdf for these files.
    \item The frontier models' abilities differ and their ranking changes frequently, so one should use several models and follow their performance.  A pattern which seems more
    stable is that the coding agents with harness (Claude Code and Codex), while better at coding, will also try hard to prove incomplete or even false statements.
    The chat/query agents (especially Gemini) are more able to evaluate definitions and theorem statements for correctness and propose refinements and corrections.
    \item Specific wording of prompts is less important than it used to be.  If the models have the information they need, they can work with high level instructions.
    Skills such as lean4-skills (at \\ {\tt https://github.com/cameronfreer/lean4-skills}) and tools such as Lean-LSP MCP are valuable.
    \item The most difficult phase of a project is the beginning, developing project-specific definitions and a proof sketch.  One approach, and arguably the best approach, if a lot of human time and
    expertise are available, is to do all this by hand.  On the other hand this doesn't scale and we are having reasonable success with letting the agents do this,
    with a good deal of human review and guidance.
    \item It is very advantageous if the precise final theorems to prove can be formulated from the start, with a minimal use of project-specific and non-tested definitions. 
    If too many new definitions are required, the project may be out of range.
    \item One should give the models the actual math/physics papers with the informal proofs (latex is better but pdf works) and ask to study them and make summaries.
    \item Although one can let the models improvise a proof sketch, once it is reasonably filled in (having a clear proof path but with many sorrys or axioms) one should have the
    models make clear summaries and vet these thoroughly before proceeding.
    \item We found a backward chaining approach to work well: the first draft of a proof could use sorrys; ask the agents
    to fill them directly; if any are too hard for this then ask the agent to formulate a ``textbook axiom''
    which would fill them, then carefully vet these proposals with other models and by human review before signing off
    and leaving them to prove later.
    \item Keeping the supplementary docs and lean docstrings up to date pays off.
    \item One needs to use the best (``pro'') models and a professional subscription to do research level formalization.
\end{itemize}

\section*{Acknowledgments}

We thank Jeremy Avigad, Sourav Chatterjee, Martin Hairer, Jesse Han, Arthur Jaffe, Kim Morrison, Dwarkesh Patel, Tripp Roberts, Scott Sheffield, Julian Sonner, Christian Szegedy, Yoh Tanimoto, Joseph Tooby-Smith, Zixiao Wang and Xi Yin for valuable discussions. 
Some early stages of the project were done during the July 2025 visit of MRD and SH to Morph Labs/Math Inc,
whom we thank for inspiration and hospitality.
We thank the many contributors to and maintainers of Mathlib, we thank R\'emy Degenne and Peter Pfaffelhuber for 
formalizing the Kolmogorov extension theorem, and we thank
Matteo Cipollina for his independent formalization of the axioms in our first release. 
Finally, we thank Alok Singh for suggesting an MCP setup and more generally for cluing the authors into the techniques being used by Lean engineers, and David Renshaw (who is supported by ICARM) for code golfing.

\bibliography{airefs,trans,mathrefs,mathqft,qft,lean}

\pagebreak

\appendix

\section{Notations} \label{app:notation}
Here we provide the notations we used in this paper.
\begin{enumerate}
    \item Function Spaces
    \begin{enumerate}
        \item For $\mathbb{F} \in \{\R,\C\}$, $C^{k}(\R^D;\mathbb{F})$ will denote the space of $\mathbb{F}$-valued functions on $\R^D$ for which all derivatives of order $\leq k$ exist and are continuous, and we define the set of smooth functions as $C^{\infty}(\R^D;\mathbb{F}) = \bigcup_{k \geq 0} C^k(\R^n;\mathbb{F})$. For $\mathbb{F}=\C$ we will simply denote the spaces $C^{k}(\R^D;\mathbb{F})$ ($C^{\infty}(\R^D;\mathbb{F})$) by $C^{\infty}(\R^D)$ ($C^{k}(\R^D)$).
        
        \item $C_c^{\infty}(\R^D)$ will denote the set of smooth functions with compact support. We frequently use the notation $\mathcal{D}(\R^D)=C_c^{\infty}(\R^D)$ for the set of test functions. The topology on $\mathcal{D}(\R^D)$ is specified by saying a sequence of functions $f_n \to f$ in $\mathcal{D}(\R^n)$ if $\partial^{\alpha} f_n \to \partial f$ uniformly on $K$, for any multi-index $\alpha$ and compact set $K \subset \R^D$.

        \item The space of Schwartz functions $\mathcal{S}(\R^D;\mathbb{F})$ denotes the set of  $f \in C^{\infty}(\R^D;\mathbb{F})$ for which all derivatives $\partial^{\alpha}f$ decay faster than any polynomial, i.e. $\sup_{x \in \R^n}|(1+|x|)^{-k}\partial^{\alpha}f(x)|<\infty$ for all $k \in \mathbb{N}$ and $\alpha \in \mathbb{N}^n$. The seminorms $\|f\|_{k,\alpha}:=\sup_{x \in \R^n}|(1+|x|)^{-k}\partial^{\alpha}f(x)|$ turn $\mathcal{S}(\R^D;\mathbb{F})$ into a Fréchet space.
        
        \item  For $p\in [1,\infty)$, and a measure space $(X,\mathcal{F},\mu)$ and $\mathbb{F} \in \{\R,\C\}$, \\$L^p(X;\mathbb{F})=L^p(\mu;\mathbb{F})=\{ \text{measurable } f \text{ on } \R^n \text{ taking values in } \mathbb{F}: \int_X |f|^p dx <\infty \}$, and\\
        $L^{\infty}(X;\mathbb{F})=L^{\infty}(\mu;\mathbb{F})=\{ \text{measurable } f \text{ on } \R^n \text{ taking values in } \mathbb{F}: \mathrm{ess\,sup}|f| <\infty \}$
        Once again when $\mathbb{F}=\C$ we omit $\mathbb{F}$ from the notation. $L^p(X;\mathbb{F})$ is a Banach space with norm $\|f\|_{L^p}:=( \int_X |f|^p dx)^{1/p}$ for $p< \infty$ and $\|f\|_{L^{\infty}}:=\mathrm{ess\,sup}|f|$.
    \end{enumerate}
    \item Distributions
    \begin{enumerate}
        \item The space of distributions $\mathcal{D}'(\R^D)$ is the dual space of $\mathcal{D}(\R^D)$.
        
        \item The space of tempered distributions $\mathcal{S}'(\R^D)$ is the dual space of $\mathcal{S}(\R^D)$.
    \end{enumerate}
    
    \item Complex Analysis
    \begin{enumerate}
        \item A function $f: \C^n \to \C$ is \textit{holomorphic} at a point $(z_1,\dots,z_n)\equiv\vec z$ if all of its first order complex derivatives exist as complex linear maps at $\vec z$.  
        Complex linearity requires $u D_{u \vec z} f=D_{\vec z} f$ for complex $u$, which implies $\bar\partial_{\vec z} f=0$.
        
        \item A function $f: \C^n \to \C$ is \textit{analytic} at a point $(z_1,\dots,z_n)$ if it can be represented by a convergent power series in a neighborhood of the point. One can prove that holomorphic functions are analytic (Goursat's theorem).

        \item A function $f: \C^n \to \C$ is \textit{entire} if it is analytic at every point in $\C^n$.
        
    \end{enumerate}
\end{enumerate}

\section{Proofs of Glimm-Jaffe Axioms} \label{app:osproof}

We mostly followed the proof structure laid out in Glimm and Jaffe with slight variations.

\subsection{Analyticity (OS0)}

The strict definition of analyticity of the GFF generating functional requires that 
\begin{equation}
    S\left[\sum_i z_i f_i\right] = \int\exp\left(i \sum_i z_i\, \phi(f_i)\right) d\mu(\phi)
\end{equation}
admits a converging power series expansion with respect to all $z_i$ for each fixed $\omega$. In practice, it would be hard to construct an expansion of $S\left[\sum_i z_i\, f_i\right]$ and prove its convergence; instead, we used Hartog's theorem to simplify the proof. 

Hartog's theorem states that in finite dimension, holomorphy of a complex function ($\mathbb{C}$-differentiable) implies its analyticity. Therefore OS0 simplifies to proving 
\begin{equation} \label{eq:OS0derv}
    \frac{\partial}{\partial z_j} \int\exp\left(i \sum_i z_i \, \phi(f_i)\right) d\mu(\phi)= \int i \, \phi(f_j) \cdot \exp\left(i \sum_i z_i \, \phi(f_i)\right)d\mu(\phi)
\end{equation}
is well-defined. This requires that \\
1. the integrand $\exp\left(i \sum_i z_i \,\phi(f_i)\right)$ is measurable, integrable (for $S\left[\sum_i z_i\, f_i\right]$ to be well-defined), and analytic (derivative exists);\\
2. the $z$ derivative is bounded by an integrable function $B(\phi)$, so that the derivative and integral can be exchanged;\\
3. the $z$ derivative is measurable, and hence integrable when combined with (2).

The hardest part in the proof was to establish a bound on the derivative. As we can see on the RHS of \ref{eq:OS0derv}, its norm is bounded by
\begin{equation}
    | i \, \phi(f_j) \cdot \exp\left(i \sum_i z_i \, \phi(f_i)\right)|\leq \sum_j |\, \phi(f_j)|\exp\left(-\, [\phi(f_i)]_{\text{im}}\right).
\end{equation}
To prove integrability of the bound, we use Fernique's Theorem, which states that for any $f$, there exists $\alpha>0$ such that
\begin{equation}
    \int \exp\left(\alpha \,[ \phi(f)]^2\right) d\mu(\phi) < \infty,
\end{equation}
and hence we can show $|\, \phi(f_j)|,\exp\left(-[\phi(f)]_{\text{im}}\right)\in L^2$. Lastly, we can apply Hölder's inequality, which says that product of $L^2$ functions is in $L^1$, to show that the RHS of \ref{eq:OS0derv} is integrable. In the actual proof, the exact bound is slightly different because we require the bound to be uniform in a neighborhood of $z_{0}$ (when taking the $\partial z_j$ derivatives), so we need to expand the bound function around $z_0$; still, the general logic follows the same way.

\subsection{Regularity (OS1)}

For GFF, we will prove \ref{eq:OS1} with $p=2$ and $c=\frac{1}{2m^2}$. The local integrability follows from the close-form expression of two-point functions:
\begin{equation} \label{eq:cov}
    K(x,y)=S_2(x,y)=\frac{m}{4\pi^2 |x-y|} K_1(m|x-y|).
\end{equation}
Let $y=0$ and take $x\to 0$, we have
\begin{equation} 
    \frac{m}{4\pi^2 |x|} K_1(m|x|)\sim \frac{1}{4\pi^2|x|^2}
\end{equation}
which is locally integrable in 4-dimension. 

First, by Minlos construction of the GFF, we have
\begin{equation}
    S(f)=\langle f,Kf\rangle\equiv\int d^4x\,d^4y\; f^*(x)\,K(x,y)\,f(y).
\end{equation}
To obtain an exponential bound, first notice that
\begin{equation}
    |S(f)|=\bigg|\exp\left(-\frac{1}{2}\langle f,Kf\rangle_{\mathbb{C}}\right)\bigg|=\exp\left(-\frac{1}{2}\text{Re}\langle f,Kf\rangle_{\mathbb{C}}\right),
\end{equation}
then decompose $f$ into its real and imaginary components
\begin{equation}
    \langle f, Kf \rangle_{\mathbb{C}} = \langle f_{\text{re}}, Kf_{\text{re}} \rangle + i \langle f_{\text{re}}, Kf_{\text{im}} \rangle + i \langle f_{\text{im}}, Kf_{\text{re}} \rangle - \langle f_{\text{im}}, Kf_{\text{im}} \rangle.
\end{equation}
Taking the real part of both sides we have
\begin{equation}
    \text{Re}\langle f, Kf \rangle_{\mathbb{C}} = \langle f_{\text{re}}, Kf_{\text{re}} \rangle - \langle f_{\text{im}}, Kf_{\text{im}} \rangle
\end{equation}

Since $K$ is positive semi-definite, $\langle f_{\text{re}}, Kf_{\text{re}} \rangle \geq 0$, so
\begin{equation}
    -\text{Re}\langle f, Kf \rangle_{\mathbb{C}} \leq \langle f_{\text{im}}, Kf_{\text{im}} \rangle.
\end{equation}
To bound $\langle f,Kf\rangle$, we can expand it in momentum space
\begin{equation}
    \langle f, Kf \rangle = \int \frac{|\hat{f}(k)|^2}{ |k|^2 + m^2} \, dk
\end{equation}
where $\hat{f}(k)$ is the fourier transform of $f$. Notice that the denominator is bounded by
\begin{equation}
    \frac{1}{|k|^2+m^2}\leq\frac{1}{m^2}
\end{equation}
and by Plancherel's theorem ($\int |f(x)|^2 dx = \int |\hat{f}(k)|^2dk$), we get the bound
\begin{equation}
    \langle f, Kf \rangle\leq \frac{1}{m^2}\|f\|^2_{L^2},
\end{equation}
and therefore
\begin{equation}
    \langle f_{\text{im}}, Kf_{\text{im}} \rangle\leq \frac{1}{m^2}\|f_{\text{im}}\|^2_{L^2}\leq\frac{1}{m^2}\|f\|^2_{L^2}
\end{equation}
and
\begin{equation}
    |S(f)| = \exp\left(-\frac{1}{2}\text{Re}\langle f, Kf \rangle\right) \leq \exp\left(\frac{1}{2}\langle f_{\text{im}}, Kf_{\text{im}} \rangle\right) \leq \exp\left(\frac{1}{2m^2} \|f\|_{L^2}^2\right)\leq \exp\left(\frac{1}{2m^2} (\|f\|_{L^1}+\|f\|_{L^2}^2)\right).
\end{equation}

\subsection{Euclidean Invariance (OS2)}

We would like to show that under any Euclidean transformation $E$ (translation and rotation), $S(f)$ remains unchanged. First we perform a change of variable by $x\to E^{-1}x$ and $y\to E^{-1}y$: 
\begin{equation}
    S(Ef)=\int d^4x\,d^4y\; f^*(Ex)K(x,y)f(Ey)=\int d^4(E^{-1}x)\,d^4(E^{-1}y)\; f^*(x)K(E^{-1}x,E^{-1}y)f(y).
\end{equation}
From \ref{eq:cov}, we can see that $K(x,y)$ only depends $|x-y|$, and therefore $K(E^{-1}x,E^{-1}y)=K(x,y)$ is invariant under Euclidean group action. Furthermore, the Lebesgue measure is also invariant under translation and rotation ($|\det R|=1$), so $ d^4(E^{-1}x)=d^4x$. Combining these facts, we can conclude that 
\begin{equation}
    S(Ef)=\int d^4(x)\,d^4(y)\; f^*(x)K(x,y)f(y)=S(f).
\end{equation}

\subsection{Reflection Positivity (OS3)}

The statement of OS3 as in \ref{eq:OS3} refers to the reflection positivity of the generating functional. First, we would want to use the Schur product theorem to reduce this statement to the reflection positivity of the free covariance. An equivalent statement with \ref{eq:OS3} is that for every finite sequence of functions $f_{j}$, $S\{f_i-\theta f_j\}$ is positive semi-definite. We want to show that 
\begin{equation}
    \langle \theta f,Kf\rangle\geq 0 \Rightarrow S\{f_i-\theta f_j\}=\exp\left(-\frac{1}{2}\langle f_i - \Theta f_j, K(f_i - \Theta f_j) \rangle\right)\succeq 0
\end{equation}
for any $f$ defined on positive time. Rewriting the exponent on the RHS gives 
\begin{equation}
    \langle f_i - \Theta f_j, K(f_i - \Theta f_j) \rangle = \langle f_i, Kf_i \rangle + \langle f_j, Kf_j \rangle - 2\langle \Theta f_i, Kf_j \rangle
\end{equation}
where we used the time-reflection symmetry of the inner product. Then we can expand the exponential
\begin{equation} \label{eq:OS3expand}
    \exp\left(-\frac{1}{2}\langle f_i - \Theta f_j, K(f_i - \Theta f_j) \rangle\right)=\exp\left(-\frac{1}{2}\langle f_i,Kf_i\rangle\right)\times\exp\left(-\frac{1}{2}\langle f_j,Kf_j\rangle\right)\times\exp\left(R_{ij}\right)
\end{equation}
where  $R_{ij}\equiv \langle \theta f_i,Kf_j\rangle$. For any coefficient $c$, the quadratic form is
\begin{equation}
    \sum_{i,j} c_i c_j R_{ij} = \left\langle \Theta\left(\sum_i c_i f_i\right), K\left(\sum_j c_j f_j\right) \right\rangle\geq0
\end{equation}
where the last inequality is given by free covariance reflection positivity, treating $\sum c_i f_i$ as one function. Since $R_{ij}$ is positive semi-definite, by Schur product theorem (product of PSD is also PSD), we conclude that $\exp\left(R_{ij}\right)$ is PSD. Finally, looking at the quadratic form of \ref{eq:OS3expand}, we have
\begin{equation}
    \sum_{i,j}c_ic_jS(f_i-\theta f_j)=\sum_{i,j}\bigg[c_i\exp\left(-\frac{1}{2}\langle f_i,Kf_i\rangle\right)\bigg]\bigg[c_j\exp\left(-\frac{1}{2}\langle f_j,Kf_j\rangle\right)\bigg]\exp\left(R_{ij}\right)\geq 0
\end{equation}
which is exactly the condition for OS3.

Now the proof os OS3 simplifies to proving the reflection positivity of the free covariance $\langle \theta f,Kf\rangle$. Expanding the inner product gives
\begin{equation}
\begin{split}
        \langle \theta f,K f\rangle 
        &\equiv \int d^4x\, d^4y\,f^*(-x_0, \vec{x}) f(y_0, \vec{y})K(x,y)\\
    &=\int d^4x\, d^4y\,f^*(x_0, \vec{x}) f(y_0, \vec{y})\int d^3k\,dk_0\, \frac{1}{k^2+m^2}e^{-ik_0(x_0+y_0)+i\vec{k}\cdot\vec{x}},
\end{split}
\end{equation}
where $\theta$ is the time reflection operator and $K(x,y)$ is the free covariance, or the Green's function. Naively, we want to first perform the $k_0$ integral to get
\begin{equation}
    \int k_0\, \frac{1}{k^2+m^2}e^{-ik_0(x_0+y_0)} = \frac{\pi}{\omega_k}e^{-\omega_k |x_0+y_0|},
\end{equation}
where $\omega_k=\sqrt{\vec{k}^2+m^2}$. Then we commute the $\vec{k}$ integral outside the $x$ and $y$ integrals, and perform the $\vec{x}$ and $\vec{y}$ integrals to get
\begin{equation}\label{eq:naiveOS3}
\begin{split}
    &\int d^4x\, d^4y\int d^3k \,f^*(x_0, \vec{x}) f(y_0, \vec{y})\frac{\pi}{\omega_k}e^{-\omega_k |x_0+y_0|}e^{i\vec{k}\cdot\vec{x}}\\
    &=\int d^3k \int d^4x\, d^4y\,f^*(x_0, \vec{x}) f(y_0, \vec{y})\frac{\pi}{\omega_k}e^{-\omega_p |x_0+y_0|}e^{i\vec{k}\cdot\vec{x}}\\
    &=\int d^3k\int dx_0\, dy_0 \,\frac{\pi}{\omega_p}\tilde{f^*}(x_0,\vec{k})\tilde{f}(y_0,\vec{k})e^{-\omega_k |x_0+y_0|}\\
    &=\int d^3k\,\frac{\pi}{\omega_k} \bigg| \int dt\, f(t,\vec{k})e^{-\omega_p t} \bigg|^2 \geq 0
\end{split}
\end{equation}
given that $\omega_k\geq0$ and $f$ is supported in positive time only. However, this did not work so simply in lean, as we require the integrand to be absolutely convergent to use Fubini theorem for exchanging order of integration. The integrand in the first line of \ref{eq:naiveOS3} is NOT absolutely integrable - the integrand is finite for $x\neq y$, but it relies on the oscillating property. If we take the absolute value, the factor $1/\sqrt{k^2+m^2}$ is divergent under the 3D $k$-integral. To resolve this, we need to make use of the Schwinger parametrization:
\begin{equation}\label{eq:heatExpand}
\begin{split}
    \langle \theta f,K f\rangle 
        &\equiv \int d^4x\, d^4y\,f^*(x) f(y)K(\theta x,y)\\
        &=\int_x \int_y f^*(x)\, f(y) \left[\int_0^\infty e^{-sm^2}\, H(s, |\Theta x - y|)\, ds\right] \\
        &= \int_0^\infty e^{-sm^2} \left[\int_x \int_y f^*(x)\, f(y)\, H(s, |\Theta x - y|)\right] ds
\end{split}
\end{equation}
where 
\begin{equation}
    H(s,r)\equiv \frac{1}{(4\pi s)^2}\exp\left(-\frac{r^2}{4s}\right)
\end{equation}
is the heat kernel, and bounded by $s^{-2}$. The integrand is absolute integrable in this case: the Schwartz functions $f^*(x)$ and $f(y)$ are bounded and integrable in $x$ and $y$ by definition; $H(s,r)$ is bounded, in addition to the exponential decay $e^{-sm^2}$ on $s$.\footnote{The exact proof can be found in OS3\_foundations, theorem named Schwinger\_bilinear\_integrable.} Therefore, we are allowed to perform the change of integration order in \ref{eq:heatExpand}.

We then use the fourier representation of the heat kernel
\begin{equation}
    H(s,|z|) = (2\pi)^{-4} \int_k \exp(-ik\cdot z) \exp(-s|k|^2)
\end{equation}
to rewrite the full integral as
\begin{equation}\label{eq:fullOS3}
    \int_{s\geq0} \int_x \int_y\int_{\vec{k}}\int_{k_0} e^{-sm^2}  f^*(x)\, f(y)\, \exp(-ik\cdot (\theta x-y)) \exp(-s|k|^2).
\end{equation}
The $k_0$ integral yields
\begin{equation}
    \int e^{ik_0(x_0+y_0)}e^{-sk_0^2}\,dk_0=\sqrt{\frac{\pi}{s}}e^{-(x_0+y_0)^2/4s}.
\end{equation}
In order to apply Fubini theorem, we will show that the absolute value of the integrand
\begin{equation}\label{eq:schwingerInt}
    \bigg|\sqrt{\frac{\pi}{s}}e^{-(x_0+y_0)^2/4s}e^{-s(m^2+\vec{k}^2)}  f^*(x)\, f(y)\, e^{i\vec{k}\cdot\vec{x}} \bigg|
\end{equation}
is bounded by an integrable function. When taking the norm, $\exp(i\vec{k}\cdot\vec{x})$ has norm 1 and drops out. We can bound the Schwartz functions by
\begin{equation}
    |f(x)||f(y)|\leq C\cdot\frac{x_0 y_0}{(1+|\vec{x}|^2)^N(1+|\vec{y}|^2)^N} 
\end{equation}
for some constant $C$ and any $N\in\mathbb{Z}^+$, given that $x_0,y_0>0$ as $f$ is only defined on positive time. Then we can use the Gaussian moment formula on the $x_0$ and $y_0$ integrals, and use $N=4$ for the spatial integrals to get
\begin{equation}
\begin{split}
    &\int_x\int_y\,\bigg|\sqrt{\frac{\pi}{s}}e^{-(x_0+y_0)^2/4s}e^{-s(m^2+\vec{k}^2)}  f^*(x)\, f(y)\, e^{i\vec{k}\cdot\vec{x}} \bigg|\\
    &\leq C \int_x\int_y\sqrt{\frac{\pi}{s}}\frac{x_0 y_0}{(1+|\vec{x}|^2)^4(1+|\vec{y}|^2)^4}e^{-(x_0+y_0)^2/4s}e^{-s(m^2+\vec{k}^2)}\\
    &=\frac{4C'}{3}\sqrt{\frac{\pi}{s}}e^{-s(m^2+\vec{k}^2)}\sqrt{\pi}s^{2}
\end{split}
\end{equation}
where $C'$ is $C$ multiplied by the finite result from the spatial integrals. This shows that \ref{eq:schwingerInt} is integrable under the $(x,y)$ double integration. Then we can prove that this is still integrable under the $\vec{k}$ ans $s$ integrals:
\begin{equation}
    \int_{s\geq0}\int_{\vec{k}}e^{-s(m^2+\vec{k}^2)}s^{3/2}=\int_{s\geq 0}e^{-sm^2}s^{3/2}(\pi/s)^{3/2}=\frac{\pi^{3/2}}{m^2}
\end{equation}
is finite. 

After rearranging the order of integration in \ref{eq:fullOS3}, we have
\begin{equation}
\begin{split}
    &\int_{\vec{k}}\int_x\int_y\int_{s\geq0}\,\sqrt{\frac{\pi}{s}}e^{-(x_0+y_0)^2/4s}e^{-s(m^2+\vec{k}^2)}  f^*(x)\, f(y)\, e^{i\vec{k}\cdot\vec{x}} \\
    &=\int_{\vec{k}}\int_x\int_y\,\frac{\pi}{\omega}e^{-\omega|x_0+y_0|}f^*(x)\, f(y)\, e^{i\vec{k}\cdot\vec{x}}
\end{split}
\end{equation}
where we used $\int_0^\infty s^{-1/2}\exp(-a/s - bs) \,ds = \sqrt{\pi/b} \exp(-2\sqrt{ab})$. This is essentially the exchange of integration order we wanted in \ref{eq:naiveOS3}, and the positivity simply follows as proposed.

\subsection{Ergodicity (OS4)}\label{subsection:OS4}

Once we know the polynomial clustering statement \eqref{eq:OS4-Poly-Clustering}, OS4 follows quite readily. First instead of proving ergodicity for functions of the form $A(\phi)=\sum_j c_j e^{(\phi,f_j)}$, we can reduce the claim to working with single exponential functions $ e^{(\phi,f)}$ by Minkowski's ($L^2$ triangle) inequality and linearity of the integral.
\begin{equation}
    \bigg\|\frac{1}{t}\int_0^t A(T_s \phi) ds-\E_{\mu}[A(\phi)]\bigg\|_{L^2(\mu)} \leq \sum_{j}|c_j| \bigg\|\frac{1}{t}\int_0^t e^{(T_s\phi,f_j)} ds-\E_{\mu}[e^{(\phi,f_j)}]\bigg\|_{L^2(\mu)}.
\end{equation}
As a note for implementation in lean, even this seemingly trivial step was not so easy to implement when presented to Claude code. In particular the difficulty was in verifying  measurability for basic operations such as linearity of the integral. It was useful to tell Claude code to make general measurability lemmas which would apply to later arguments.

The next step is to use linearity and Fubini as follows,
    \begin{equation}\label{eq:L^2-comp}
        \begin{split}
        \bigg\| \E_{\mu}[e^{(\phi,f)}]-\frac{1}{t} \int_{0}^{t} e^{(T_s\phi,f)}ds\bigg\|_{L^2}^2&= \bigg\| \frac{1}{t}\int_{0}^{t} (\E_{\mu}[e^{(\phi,f)}]- e^{(T_s\phi,f)}) ds\bigg\|_{L^2}^2\\ 
        &= \frac{1}{t^2}\int_{0}^{t} \int_{0}^{t} \E_{\mu}[(\E_{\mu}[e^{(\phi,f)}]- e^{(T_s\phi,f)})(\overline{\E_{\mu}[e^{(\phi,f)}]- e^{(T_{s'}\phi,f)}})]ds ds'.
        \end{split}
    \end{equation}
    The justification for Fubini follows from the Fernique bound which was once again the most difficult part of the above equation display to implement in lean.
    
    After a little algebra, using the easy to prove property that $C(T_s g,T_s g)=C(g,g)$ for any Schwartz function $g$ (this is proven as a lemma in our lean repository), and
    plugging in the polynomial clustering bound,
\begin{equation}
    \begin{split}
        ~&|\E_{\mu}[(e^{(\phi,f)}- e^{(T_s\phi,f)})(e^{(\phi,f)}- e^{(T_{s'}\phi,f)})]| \\
        &=|\E_{\mu}[e^{(\phi,f)}]\E_{\mu}[e^{(\phi,\overline{f})}]-\E_{\mu}[e^{(T_s\phi,f)+(T_{s'}\phi,\overline{f})}]|\\
        & \leq c(1+|s-s'|)^{-2}.
        \end{split}
    \end{equation}
    Plugging this bound back into \eqref{eq:L^2-comp} we have,
    \begin{equation}\label{eq:OS4'-to-OS4''-last-comp}
        \begin{split}
            \bigg\| \E_{\mu}[e^{(\phi,f)}]-\frac{1}{t} \int_{0}^{t} e^{(T_s\phi,f)}ds\bigg\|_{L^2} &\leq c \frac{1}{t^2}\int_{0}^{t}\int_{0}^{t} (1+|s-s'|)^{-2} ds ds' \\
        &\leq c\sup_{s' \in [0,t]}\frac{1}{t}\int_{0}^{t} (1+|s-s'|)^{-2} ds\\
        &\leq c\frac{1}{t}\int_{-\infty}^{\infty}  (1+|s|)^{-2}ds \to 0
        \end{split}
    \end{equation}
    as $t \to\infty$. The equation display above was fairly simple to implement in lean. \\

The remaining task for OS4 is now to prove the polynomial clustering version \eqref{eq:OS4-Poly-Clustering}. This property follows from the following two key results

\begin{enumerate}
    \item (Pointwise decay of correlation functions) The covariance function $K(x,y)$ satisfies a bound of the form,
    \begin{equation}\label{eq:pointwise-decay-of-correlations}
        |K(x,y)| \leq c|x-y|^{-2}e^{-m|x-y|}
    \end{equation}
    for some constant $c>0$.
    \item (Decay of averaged correlation functions) Let $E$ be a finite-dimensional real normed vector space with $d=\dim E$.
    Let $f,g\in\mathcal{S}(E;\mathbb{C})$ be Schwartz functions.
    Let $K:E\to\mathbb{R}$ be measurable and locally integrable. Assume there exist
    $m>0$, $C_K>0$, $R_0>0$ such that for all $z\in E$ with $\|z\|\ge R_0$,
    \[
    |K(z)|\le C_K e^{-m\|z\|}.
    \]
    Then for every $\alpha>0$ there exists $c\ge 0$ such that for all $a\in E$,
    \begin{equation}\label{eq:schwartz-decay}
    \Bigl\|\int_{x\in E}\int_{y\in E} f(x)\,K(x-y)\,g(y-a)\,dy\,dx\Bigr\|
    \le c\,(1+\|a\|)^{-\alpha}.
    \end{equation}
\end{enumerate}

The bound on the covariance above follows from standard large and small scale asymptotics of the Bessel function $K_1(z)$ which we were able to implement in lean.

We give more details for the proof of \eqref{eq:schwartz-decay} since the fact we proved it for Schwartz functions required some additional argument which one would not find in the textbook of Glimm and Jaffe. The first step is the following lemma.
\begin{lemma}[Convolution polynomial decay]
Assume $N>d$. Let $F,G:E\to\mathbb{C}$ be integrable and satisfy
\[
|F(y)|\le \frac{C_F}{(1+\|y\|)^N},\qquad |G(y)|\le \frac{C_G}{(1+\|y\|)^N}.
\]
Define $h(x):=\int_E F(y)\,G(x-y)\,dy$. Then for all $x\in E$,
\[
|h(x)|\le \frac{C}{(1+\|x\|)^N}
\]
for $C := 2^N\bigl(C_F\|G\|_{L^1}+C_G\|F\|_{L^1}\bigr)$.
\end{lemma}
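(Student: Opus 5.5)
The plan is the standard splitting argument for convolutions of polynomially decaying functions. Fix $x\in E$ and split the domain of integration according to which of $y$ and $x-y$ is large. By the triangle inequality $\|x\|\le\|y\|+\|x-y\|$, so for every $y$ at least one of $\|y\|\ge\|x\|/2$ or $\|x-y\|\ge\|x\|/2$ holds. Set
\[
A:=\{y:\ \|y\|\ge \|x\|/2\},\qquad B:=E\setminus A\subseteq\{y:\ \|x-y\|\ge\|x\|/2\},
\]
and bound
\[
|h(x)|\le\int_A|F(y)||G(x-y)|\,dy+\int_B|F(y)||G(x-y)|\,dy .
\]

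On $A$ we use $1+\|y\|\ge 1+\|x\|/2\ge\tfrac12(1+\|x\|)$. This gives $|F(y)|\le C_F2^N(1+\|x\|)^{-N}$, and the remaining factor integrates to at most $\|G\|_{L^1}$ after the translation-invariant substitution $y\mapsto x-y$. On $B$ the same inequality applied to $x-y$ gives $|G(x-y)|\le C_G2^N(1+\|x\|)^{-N}$, and $|F|$ integrates to at most $\|F\|_{L^1}$. Adding the two pieces gives exactly $C=2^N(C_F\|G\|_{L^1}+C_G\|F\|_{L^1})$.

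The hypothesis $N>d$ only guarantees that the assumed bounds are integrable, which is consistent with assuming $F,G\in L^1$. It is not otherwise needed in the estimate.

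The mathematics here is elementary. I expect the main obstacle to be the formal bookkeeping.

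First, the integral defining $h(x)$ must be integrable, so that $|h(x)|\le\int|F(y)||G(x-y)|\,dy$ is meaningful. This follows because $|F(y)||G(x-y)|\le C_G|F(y)|$ with $F$ integrable and $G(x-\cdot)$ measurable.

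Second, splitting an integral of a nonnegative function over the measurable sets $A$ and $B$, and comparing with indicator functions, is most cleanly done with lower Lebesgue integrals (\texttt{lintegral}). These avoid integrability side conditions, and one converts back at the end.

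Third, $\int|G(x-y)|\,dy=\|G\|_{L^1}$ requires invariance of Haar/Lebesgue measure on $E$ under $y\mapsto x-y$. This is available for finite-dimensional normed spaces equipped with an additive Haar measure.

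I would first prove the pointwise inequality
\[
|F(y)||G(x-y)|\le 2^N(1+\|x\|)^{-N}\bigl(C_F|G(x-y)|+C_G|F(y)|\bigr),
\]
valid for all $y$ with no case split in the integral: in each case one of the two terms dominates. Integrating this single inequality avoids set-splitting altogether and is the route I would try first.
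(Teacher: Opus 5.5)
Your proposal is correct and matches the paper's proof. Both split at $\|y\|=\|x\|/2$, use $(1+\|\cdot\|)^{-N}\le 2^N(1+\|x\|)^{-N}$ on whichever of $y$ or $x-y$ is large, and invoke translation invariance to obtain $\|G\|_{L^1}$ and $\|F\|_{L^1}$. Your single pointwise inequality is just a repackaging of that split, and you are right that $N>d$ plays no role beyond the assumed integrability.
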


\emph{Proof (domain split at $\|y\|=\|x\|/2$).}
Fix $x\in E$ and split $E=A\cup B$ with
\[
A:=\{y:\|y\|\ge \|x\|/2\},\qquad B:=\{y:\|y\|< \|x\|/2\}.
\]
Then
\[
|h(x)|\le \int_A |F(y)|\,|G(x-y)|\,dy + \int_B |F(y)|\,|G(x-y)|\,dy.
\]
On $A$ we have $1+\|y\|\ge (1+\|x\|)/2$, hence $(1+\|y\|)^{-N}\le 2^N(1+\|x\|)^{-N}$ and
\[
|F(y)|\le \frac{C_F}{(1+\|y\|)^N} \le \frac{2^N C_F}{(1+\|x\|)^N}.
\]
Therefore
\[
\int_A |F(y)|\,|G(x-y)|\,dy
\le \frac{2^N C_F}{(1+\|x\|)^N}\int_E |G(x-y)|\,dy
= \frac{2^N C_F}{(1+\|x\|)^N}\|G\|_{L^1},
\]
using translation invariance in the last equality.

On $B$, the triangle inequality gives $\|x-y\|\ge \|x\|-\|y\|>\|x\|/2$, hence similarly
\[
|G(x-y)|\le \frac{2^N C_G}{(1+\|x\|)^N}.
\]
Thus
\[
\int_B |F(y)|\,|G(x-y)|\,dy
\le \frac{2^N C_G}{(1+\|x\|)^N}\int_E |F(y)|\,dy
= \frac{2^N C_G}{(1+\|x\|)^N}\|F\|_{L^1}.
\]
Adding the two bounds gives the claimed estimate. \qed

The rest of the proof of \eqref{eq:schwartz-decay} involves similar arguments decomposing the domain of integration into `singular' and `tail' regions so we assume it for the rest of the argument.

\begin{proof}[Proof sketch of OS4 (polynomial clustering) for the massive GFF]
Fix $\alpha>0$ and test functions $f,g\in\mathcal{S}(\mathbb{R}^4;\mathbb{C})$. By definition of $T_s\omega$ as the dual action on distributions,
and bilinearity of the inner product,
\[
\mathbb{E}_\mu\bigl[e^{\langle \omega,f\rangle + \langle T_s\omega,g\rangle}\bigr]
=\mathbb{E}_\mu\bigl[e^{\langle \omega,f\rangle + \langle \omega,T_{-s}g\rangle}\bigr]
=\mathbb{E}_\mu\bigl[e^{\langle \omega,f+T_{-s}g\rangle}\bigr].
\]

Next, for a centered Gaussian field with covariance bilinear form $S_2(\cdot,\cdot)$ (the Schwinger 2-point
function), one has the standard identity
\[
\mathbb{E}_\mu\bigl[e^{\langle \omega,f+T_{-s}g\rangle}\bigr]
= \mathbb{E}_\mu\bigl[e^{\langle \omega,f\rangle}\bigr]\,
  \mathbb{E}_\mu\bigl[e^{\langle \omega,T_{-s}g\rangle}\bigr]\,
  \exp\!\bigl(S_2(f,T_{-s}g)\bigr).
\]

Moreover, by time-translation invariance of the GFF, $\mathbb{E}_\mu\bigl[e^{\langle \omega,T_{-s}g\rangle}\bigr]
=\mathbb{E}_\mu\bigl[e^{\langle \omega,g\rangle}\bigr]$. So defining the constants,
$E_f:=\mathbb{E}_\mu\bigl[e^{\langle \omega,f\rangle}\bigr],\qquad
E_g:=\mathbb{E}_\mu\bigl[e^{\langle \omega,g\rangle}\bigr]$, we have,
\begin{equation}\label{eq:diff}
\mathbb{E}_\mu\bigl[e^{\langle \omega,f\rangle + \langle T_s\omega,g\rangle}\bigr]-E_fE_g
=E_fE_g\bigl(\exp(S_2(f,T_{-s}g))-1\bigr),
\end{equation}
and using the elementary inequality (valid for all $z\in\mathbb{C}$), $|\exp(z)-1|\le |z|\,e^{|z|}$,
\begin{equation}\label{eq:expminus1}
\Bigl|\mathbb{E}_\mu\bigl[e^{\langle \omega,f\rangle + \langle T_s\omega,g\rangle}\bigr]-E_fE_g\Bigr|
\le |E_f|\,|E_g|\,|S_2(f,T_{-s}g)|\,\exp(|S_2(f,T_{-s}g)|).
\end{equation}
So it suffices to prove a polynomial bound
\begin{equation}\label{eq:S2poly}
|S_2(f,T_{-s}g)|\le c_{\mathrm{decay}}\,(1+s)^{-\alpha}\qquad (s\ge 0),
\end{equation}
with some constant $c_{\mathrm{decay}}\ge 0$ depending on $(f,g,\alpha)$.

\medskip
Finally recalling that
\begin{equation}\label{eq:S2asdouble}
S_2(f,T_{-s}g)=\int_x\int_y f(x)\,K(x-y)\,g(y-\tau_s)\,dy\,dx.,
\end{equation}
we are now in a position to apply the main Schwartz double integral decay bound \eqref{eq:schwartz-decay}.
As a result, there exists a constant $c_{\mathrm{decay}}\ge 0$ such that for all $a\in\mathbb{R}^4$,
\[
\Bigl|\int_x\int_y f(x)\,K(x-y)\,g(y-a)\,dy\,dx\Bigr|
\le c_{\mathrm{decay}}\,(1+\|a\|)^{-\alpha}.
\]
Taking $a=\tau_s$ and using $1+\|\tau_s\|=1+s$ gives exactly \eqref{eq:S2poly}.

Finally combining \eqref{eq:expminus1} and \eqref{eq:S2poly} we have
\[
\Bigl|\mathbb{E}_\mu\bigl[e^{\langle \omega,f\rangle + \langle T_s\omega,g\rangle}\bigr]-E_fE_g\Bigr|
\le |E_f|\,|E_g|\,c_{\mathrm{decay}}(1+s)^{-\alpha}\,
\exp\!\bigl(c_{\mathrm{decay}}(1+s)^{-\alpha}\bigr),
\]
and since $(1+s)^{-\alpha}\le 1$, \eqref{eq:OS4-Poly-Clustering} holds with constant
\[
 |E_f\|\,\|E_g\|\,c_{\mathrm{decay}}\,e^{c_{\mathrm{decay}}}\ge 0,
\]
which is the desired OS4 polynomial clustering bound.
\end{proof}

Finally, for completeness the file {\tt OS/NonTrivial} verifies that the correlation functions are
nontrivial, as taking $\phi(x)=0$ would satisfy the axioms.

\end{document}